%% file: main.tex
\documentclass[sigconf, nonacm]{acmart}
\usepackage{booktabs} 
\usepackage[english]{babel} 
\usepackage[T1]{fontenc} 
\usepackage[utf8]{inputenc} 

\usepackage{amsmath,amsfonts,amsthm,amssymb,bm}
\usepackage[ruled,vlined,linesnumbered]{algorithm2e}
\usepackage{multirow}
\usepackage{stmaryrd}
\usepackage[caption=false,font=footnotesize,hangindent=10pt]{subfig}
\usepackage{array}
\usepackage{enumitem}
\usepackage{algorithmic}
\usepackage[capitalise]{cleveref}
\usepackage{url}
\usepackage[table]{xcolor}
\usepackage{graphicx}
\usepackage{pifont}
\usepackage{wasysym}

\newcommand\vldbdoi{XX.XX/XXX.XX}
\newcommand\vldbpages{XXX-XXX}
\newcommand\vldbvolume{19}
\newcommand\vldbissue{2}
\newcommand\vldbyear{2020}
\newcommand\vldbauthors{\authors}
\newcommand\vldbtitle{\shorttitle} 
\newcommand\vldbavailabilityurl{URL_TO_YOUR_ARTIFACTS}
\newcommand\vldbpagestyle{plain}

\newcommand{\bullethdr}[1]{\noindent\textbullet\ \,\textbf{#1}}

\begin{document}
\title{Efficient and Secure Range Counting over Distributed Geographic Data with Query Range Protection}

\author{Haoxin Yang$^1$, Pinghui Wang$^{1,\ast}$, Zhe Hou$^2$, Tian Zhou$^1$, Guangmingzi Yang$^2$, Zehua Lei$^2$, Rundong Li$^1$, Yutong Song$^1$, Yongyuan Peng$^1$, Fangming Dong$^1$, Xiaohong Guan$^1$}
\thanks{$^\ast$ Pinghui Wang is the corresponding author.}
\affiliation{%
  \institution{$^1$Xi'an Jiaotong University, $^2$China Mobile System Integration Co., Ltd}
}
\email{{yhxaxx1, xjtulirundong, Sunny52012, butter0210field, dfming}@stu.xjtu.edu.cn, phwang@mail.xjtu.edu.cn}
\email{hz4980@163.com, {tianzhou, xhguan}@xjtu.edu.cn, yangguangmingzi@126.com, leizehua@cmict.chinamobile.com}

\begin{abstract}

\input{abstract}

\end{abstract}

\maketitle

\pagestyle{\vldbpagestyle}
\begingroup\small\noindent\raggedright\textbf{PVLDB Reference Format:}\\
\vldbauthors. \vldbtitle. PVLDB, \vldbvolume(\vldbissue): \vldbpages, \vldbyear.\\
\href{https://doi.org/\vldbdoi}{doi:\vldbdoi}
\endgroup
\begingroup
\renewcommand\thefootnote{}\footnote{\noindent
This work is licensed under the Creative Commons BY-NC-ND 4.0 International License. Visit \url{https://creativecommons.org/licenses/by-nc-nd/4.0/} to view a copy of this license. For any use beyond those covered by this license, obtain permission by emailing \href{mailto:info@vldb.org}{info@vldb.org}. Copyright is held by the owner/author(s). Publication rights licensed to the VLDB Endowment. \\
\raggedright Proceedings of the VLDB Endowment, Vol. \vldbvolume, No. \vldbissue\ %
ISSN 2150-8097. \\
\href{https://doi.org/\vldbdoi}{doi:\vldbdoi} \\
}\addtocounter{footnote}{-1}\endgroup

\ifdefempty{\vldbavailabilityurl}{}{
\vspace{.3cm}
\begingroup\small\noindent\raggedright\textbf{PVLDB Artifact Availability:}\\
The source code, data, and/or other artifacts have been made available at \url{https://github.com/jackson-maybe/PPRC}.
\endgroup
}

\input{introduction}

\input{problem}

\input{preliminary}

\input{method}

\input{results}

\input{related}

\input{conclusions}

\balance
\bibliographystyle{ACM-Reference-Format}
\bibliography{references}

\end{document}

%% file: abstract.tex
Range counting is a core primitive in geographic information systems. 
When data is distributed across multiple organizations, conducting range counting raises substantial privacy concerns.
Existing privacy-preserving protocols focus on protecting organizations’ datasets, but cannot simultaneously achieve efficiency, query privacy, and accuracy on overlapping data. 
Typical protocols process query range in plaintext for efficient point-in-range evaluation, since query-private designs rely on expensive secure comparisons. Moreover, most works assume non-overlapping datasets across organizations, which leads to huge errors in overlapping scenarios. 

In this paper, we propose \emph{PPRC}, the first protocol that jointly satisfies all the privacy, efficiency, and accuracy requirements.
PPRC makes two key technical contributions.
First, we design the \emph{Private Range Predicate (PRP)} technique that supports efficient point-in-range evaluation while protecting the query range.
PRP reformulates range evaluation as encrypted membership tests, effectively replacing costly secure comparisons with faster secure multiplications.
Second, we propose \emph{Oblivious Linear Counting (OLC)}, an aggregation scheme that efficiently and securely aggregates partial results from organizations with overlapping data.
OLC involves only lightweight cryptographic operations and ensures that no information is leaked beyond the final range count. 
We theoretically analyze the accuracy, efficiency, and security of PPRC. Experiments on real-world and synthetic datasets show that PPRC achieves up to $55\times$ smaller errors and $37\times$ speedup compared to baseline protocols.

%% file: introduction.tex
\section{Introduction} \label{sec:introduction}

In geographic information systems (GIS), range counting is a fundamental task that returns the number of distinct geographic records falling within a given query range. It supports a variety of geographic analysis and decision-making tasks.

With the growing scale of data and the prevalence of multi-organizational data silos, range counting increasingly needs to be performed over \emph{distributed geographic datasets} owned by multiple data holders.
This gives rise to the \emph{Distributed Range Counting (DRC)} problem, which computes the number of records whose associated locations fall within a query range over the union of all datasets.

Such scenarios naturally arise in applications where geographic data is held by multiple organizations with their own infrastructures.
For example, a car-hailing user queries the number of available cars located within a pickup region to estimate the expected waiting time, while car location records are distributed across different car-hailing companies~\cite{li2023efficient, tong2025hu}.
Similarly, a resident estimates the number of infection cases located within a geographic region using data held by several hospitals to assess local infection risk~\cite{zwiers2024federated}.
In addition, an urban planner analyzes population or traffic density by counting records located within queried geographic regions using data from multiple agencies~\cite{chen2025u, sun2021}.
In these scenarios, the query user only requires an aggregated range count, while both the data holders' datasets and the query range are highly sensitive.

Consequently, such applications involve \emph{two privacy properties}:
\emph{data privacy}, ensuring that each data holder’s records remain confidential,
and \emph{query privacy}, preventing disclosure of the user’s query range, which may reveal personal information such as location or health status.
For instance, a query range centered around a user’s home exposes their residential area, while repeated queries near a hospital can imply potential health issues.
These properties are mandated by major data protection regulations~\cite{voigt2017eu,california2019,chen2021understanding}.

Existing DRC schemes fail to meet the two properties simultaneously.
To achieve high efficiency, most Data-Private works~\cite{li2023efficient, tong2025hu, chen2025u} only protect \textit{data privacy}, while leaving the query range in plaintext. 
In these designs, the query user submits the plaintext query range to each data holder, who computes a partial range count—the number of records in its dataset that fall within the range. Data holders then protect their partial counts via encryption or differential privacy noise before a central aggregator sums them to derive the final range count.
While efficient, this design exposes the query range and thus violates query privacy.

Protecting \textit{query privacy} typically incurs substantial computational overhead.
Existing Query-Private protocols~\cite{akhavan2023level,yu2021psafety,ZhangLZZGWSL24} and Exact multiparty computation (MPC) solutions~\cite{dauterman2022waldo,faisal2023tva} achieve query privacy using fully homomorphic encryption or secret sharing. 
These approaches conduct point-in-range evaluation through secure comparisons between record coordinates and encrypted range boundaries.
However, secure comparisons require expensive cryptographic operations, such as bit decomposition and polynomial evaluation, leading to prohibitive computation costs at scale.

Moreover, most existing DRC protocols assume that datasets across data holders contain \textit{no overlapping records}, which is frequently violated in real-world scenarios—for example, in a car-hailing scenario, a driver can be registered with multiple car-hailing companies. 
Thus, simply summing partial counts—as done in both Data-Private and Query-Private protocols—leads to large overestimation errors.
On the other hand, while Exact MPC solutions \cite{dauterman2022waldo,faisal2023tva} support privacy-preserving deduplication, they rely on costly operations such as oblivious sorting and secure comparisons, making them impractical at scale.

\begin{figure}[t]
  \centering
  \setlength{\abovecaptionskip}{0.1cm}
  \includegraphics[width=\linewidth]{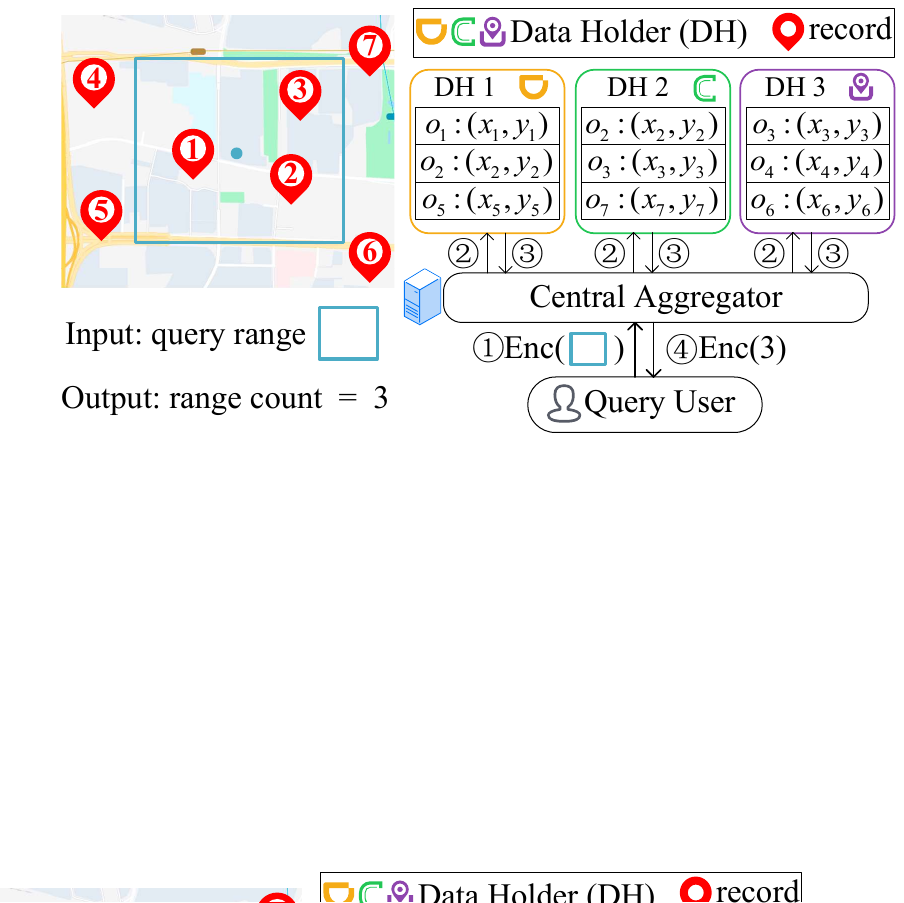}
  \caption{An example of private distributed range counting.}
  \label{fig:problem_description}
\end{figure}

These limitations motivate a stronger problem formulation, which we call \textit{Private Distributed Range Count} (PDRC), explicitly capturing the key challenges that are not jointly addressed by existing protocols.
PDRC requires simultaneously achieving: (1) \textbf{Accuracy}: which ensures accurate range counting over \emph{distinct} records by correctly deduplicating overlaps across data holders;
(2) \textbf{Bilateral Privacy}: which protects both data privacy and query privacy, preventing any leakage about data holders' datasets or the query range;
(3) \textbf{Efficiency}: which provides high efficiency and scalability for practical deployment on large datasets.
\cref{fig:problem_description} illustrates an example with three data holders containing overlapping records, where the correct range count over distinct records is 3.
Throughout the process, both the query range and the data holders' datasets must remain private.
Table~\ref{tab:comparison} further shows that existing protocols fail to satisfy accuracy, bilateral privacy, and efficiency simultaneously.

\begin{table}[htp]
\centering
\renewcommand{\arraystretch}{1.1}
\caption{Comparison of distributed range counting protocols under the PDRC requirements.
\textit{Accuracy}  indicates the robustness against overcounting errors caused by data overlap.
\textit{Privacy} indicates protection of both data and query privacy.
\textit{Efficiency} reflects scalability to large datasets.
Symbols \CIRCLE and \Circle denote \emph{Yes} and \emph{No}, respectively. For \textbf{Privacy}, \LEFTcircle denotes protection of only data privacy.}
\vspace{-0.3cm}
\label{tab:comparison}
\begin{tabular}{lccc}
\toprule
\textbf{Protocol} 
& \textbf{Accuracy} 
& \textbf{Privacy} 
& \textbf{Efficiency} \\
\midrule
Data-Private \cite{li2023efficient, tong2025hu, chen2025u}   & \Circle & \LEFTcircle & \CIRCLE \\
Query-Private \cite{akhavan2023level,yu2021psafety,ZhangLZZGWSL24}      & \Circle & \CIRCLE & \Circle \\
Exact MPC \cite{dauterman2022waldo,faisal2023tva}           & \CIRCLE & \CIRCLE & \Circle \\
\midrule
\textbf{PPRC}                      & \CIRCLE & \CIRCLE & \CIRCLE \\
\bottomrule
\end{tabular}
\end{table}

To address the PDRC problem, we propose \textit{PPRC}, the first protocol that jointly achieves accurate range counting over overlapping datasets, bilateral privacy, and practical efficiency.
Within PPRC, secure range evaluation and privacy-preserving aggregation are jointly designed to enable efficient query-private processing and lightweight deduplication.
Specifically, PPRC makes two key technical contributions.
First, we design the \textbf{Private Range Predicate (PRP)}, a range-predicate abstraction that enables efficient point-in-range evaluation while protecting the query range.
PRP reformulates range evaluation as secure membership tests via encrypted Bloom filters.
This abstraction replaces costly secure comparisons with cheaper secure multiplications, greatly reducing the overall computational cost.
PRP exposes an accuracy-efficiency tradeoff controlled by the Bloom filter size. Larger Bloom filter sizes improve query accuracy, but incur higher computational overhead. 
In addition, larger query ranges require larger Bloom filter encodings, which also increase the computational cost.
Second, we propose \textbf{Oblivious Linear Counting (OLC)}, a secure aggregation scheme designed to efficiently and securely aggregate PRP results from overlapping datasets.
OLC adapts the classic Linear Counting sketch~\cite{whang1990linear} to the encrypted domain, introducing two key technical optimizations to ensure practical performance and privacy:
(i) an efficiency optimization that restructures the sketch aggregation logic to substitute expensive cryptographic multiplications with lightweight additions, drastically reducing the computation cost; and
(ii) a privacy optimization that ensures no information is leaked beyond the final range count. 
Its accuracy-efficiency tradeoff is controlled by the sketch size: larger sketches improve counting accuracy at a higher computational cost.

Our main contributions are summarized as follows:

\begin{itemize}[leftmargin=3ex]

\item We formalize the \textit{Private Distributed Range Count} (PDRC) problem, which requires efficiency, bilateral privacy, and accuracy over overlapping datasets. We propose \textit{PPRC}, the first protocol that satisfies these requirements simultaneously.

\item We design the \textit{Private Range Predicate (PRP)} technique, a range-predicate abstraction that transforms point-in-range tests into secure membership tests, thereby eliminating expensive secure comparisons.

\item We propose \textit{Oblivious Linear Counting (OLC)}, a secure aggregation scheme tailored for overlapping datasets. 
OLC achieves practical efficiency by substituting expensive cryptographic operations with lightweight additions, and simultaneously ensures no information leakage beyond the final range count.

\item We theoretically analyze the accuracy, efficiency, and security of PPRC. Furthermore, experiments on real-world and synthetic datasets show that PPRC achieves up to $55\times$ smaller errors and $37\times$ speedup compared with baseline protocols.

\end{itemize}

%% file: problem.tex
\section{Problem Formulation}
\label{sec:problem}

\subsection{Distributed Range Count}

We consider $I \in \mathbb{N}_+$ data holders. 
Each data holder $i\in \{1,\ldots, I\}$ holds a geographic dataset $D_i$. 
In these datasets, each record is represented as $o=(\text{\textsf{id}},x,y)$, where $\text{\textsf{id}}$ denotes a unique entity identifier and $(x,y)$ denotes the associated geographic coordinates.
Different data holders may have overlapping records, and we denote $D=\bigcup^I_{i=1} D_i$ as the set of all distinct records.

Our goal is to compute the \emph{range count} $RC$, i.e., the number of distinct records in $D$ whose geographic locations fall within a query range $R$ provided by the query user.
The query range is a rectangle $R=([x_{\mathrm{l}}, x_{\mathrm{r}}], [y_{\mathrm{l}}, y_{\mathrm{r}}])$, where $[x_{\mathrm{l}}, x_{\mathrm{r}}]$ and $[y_{\mathrm{l}}, y_{\mathrm{r}}]$ represent the ranges along the $x$- and $y$-dimensions, respectively.
For a record $o=(\text{\textsf{id}},x,y)$, we define the in-range indicator $\mathbf{1}_R(o) = 1$ if $x \in [x_{\mathrm{l}}, x_{\mathrm{r}}] \land y \in [y_{\mathrm{l}}, y_{\mathrm{r}}]$, and $0$ otherwise.
The range count is then defined as:
$$RC = \sum_{o \in D} \mathbf{1}_R(o).$$

Unlike conventional single-data-holder settings
\cite{akhavan2023level,zheng2021efficient,zhang2022efficient},
our problem focuses on distributed range counting over overlapping datasets,
where secure duplicate elimination across multiple data holders is required.
We focus on rectangular ranges, naturally capturing real-world workloads like car-hailing in grid-structured cities.

\subsection{System Model}  \label{subsec:threatmodel}
Our system comprises three entities: the Data Holders (DHs), the Central Aggregator (CA), and the Query User (QU). 
Each DH holds a private dataset and strictly follows the protocol for computations on its data. 
The CA securely aggregates encrypted information from the DHs and sends the encrypted result to the QU. In practice, the CA can be a third-party server or one of the DHs.
The QU generates the homomorphic encryption keys, sends the encrypted range to DHs via the CA, and receives the encrypted query result from the CA. Furthermore, the QU is the sole entity that obtains the plaintext of the range count.

\subsection{Threat Model}
The QU, CA, and DHs are assumed to be \emph{honest-but-curious}, i.e., they strictly follow the protocol but attempt to infer additional information from protocol execution.
In particular, the QU attempts to infer DHs’ private datasets beyond the query result. The CA and each DH attempt to learn the query range submitted by the QU, the query result, or other DHs’ private datasets.
We assume that the CA may collude with one or more DHs, and that DHs may collude among themselves. 
We also assume that the QU does not collude with the CA or any DH. Such collusion is outside the scope of our threat model because it would allow the QU to obtain intermediate protocol information through the colluding party, potentially enabling inference of additional private information from the DHs' datasets beyond the query result.

\subsection{Design Goal}

In this work, our goal is to propose an efficient and secure protocol to compute the range count across overlapping datasets. Specifically, the following properties should be satisfied:
\begin{itemize}[leftmargin=3ex] 

\item \textbf{Accuracy:} The computed range count should reflect the number of \textit{distinct} records within the query range, eliminating duplicated contributions caused by overlaps among datasets.

\item \textbf{Data Privacy:} The dataset of each DH should remain confidential. Neither the other DHs nor the CA can infer any information about it, and the QU learns no additional information about DHs' datasets beyond the final range count.

\item \textbf{Query Privacy:} Only the QU knows the plaintext query range, while DHs and the CA gain no knowledge about it.

\item \textbf{Efficiency:} The protocol should be highly efficient and scalable for large datasets in real-world settings.

\end{itemize}

%% file: preliminary.tex
\section{Preliminaries}
\label{sec:preliminary}

\input{ss_pre}

%% file: ss_pre.tex
\subsection{Bloom Filter}\label{subsec:bloom_filter}
Bloom filter \cite{geravand2013bloom} is a classic data structure used to 
solve the membership test problem, i.e., 
determining whether an element belongs to a given set.
For a set $U$, the Bloom filter encodes all elements of $U$ into an array of $M$ bits, denoted as $\mathrm{BF}[m], 1 \leq m \leq M$. Initially, all bits in the array are set to 0.
Then, we choose $K$ independent hash functions $\{h_1, \ldots, h_K\}$, each mapping $u \in U$ uniformly to an index in $\{1,\ldots,M\}$. To encode $u \in U$ into BF, we set $\mathrm{BF}[h_k(u)]=1$ for $1 \leq k \leq K$.
After encoding all elements of $U$ into the BF, we can test whether an element $u'$ belongs to $U$ by checking whether $\mathrm{BF}[h_k(u')]=1$ for all $1 \leq k \leq K$.
If any bit is 0, $u'$  is definitely not in $U$. If all bits are 1, $u'$ is in $U$ with a high probability. Notably, a Bloom filter can produce \emph{false positive}. Assuming the cardinality of $U$ is $\mu$, the false positive probability can be computed as: 
$
    f_p = (1-(1-\frac{1}{M})^{K\mu})^K \approx (1-e^{-K\frac{\mu}{M}})^K.
$

\subsection{Fully Homomorphic Encryption}\label{subsec:HE}

Fully Homomorphic Encryption (FHE) is a family of cryptographic schemes that enables computations directly over encrypted data~\cite{BourseMMP18}. 
Typically, an FHE scheme supports the following homomorphic computations: i) Add-\uppercase\expandafter{\romannumeral1}: $\mathrm{E}(m_1)+\mathrm{E}(m_2) \rightarrow \mathrm{E}(m_1+m_2)$;
ii) Add-\uppercase\expandafter{\romannumeral2}: $\mathrm{E}(m_1)+m_2 \rightarrow \mathrm{E}(m_1+m_2)$;
iii) Mul-\uppercase\expandafter{\romannumeral1}: $\mathrm{E}(m_1) \cdot \mathrm{E}(m_2) \rightarrow \mathrm{E}(m_1 \cdot m_2)$; 
iv) Mul-\uppercase\expandafter{\romannumeral2}: $\mathrm{E}(m_1) \cdot m_2 \rightarrow \mathrm{E}(m_1 \cdot m_2)$, where $m_1$ and $m_2$ are plaintexts, and $\mathrm{E}(m_1)$ and $\mathrm{E}(m_2)$ are their corresponding FHE ciphertexts.

In our proposed protocol, we employ the FHE scheme as a cryptographic tool and exploit its addition and multiplication homomorphic properties.
Our protocol is instantiated using the Symmetric Homomorphic Encryption (SHE) scheme proposed by Mahdikhani et al.~\cite{journals/iotj/MahdikhaniLZSG20}, which has been used in several searchable encryption schemes~\cite{zheng2021efficient,zhang2022efficient,tu2023pmrk,li2023eppsq}. To evaluate the impact of the underlying homomorphic encryption scheme, we additionally instantiate our protocol with the representative RLWE-based FHE scheme BFV~\cite{2012somewhat} and report the comparison in our experimental evaluation.

Compared with RLWE-based schemes such as BFV~\cite{2012somewhat} and BGV~\cite{BrakerskiGV14}, the adopted SHE scheme relies only on modular integer arithmetic instead of polynomial-ring operations, resulting in lower computational overhead and more compact ciphertexts.
These properties make it particularly suitable for lightweight protocols.

Despite its simplicity, the SHE scheme is CPA-secure, ensuring that encryptions of plaintexts are computationally indistinguishable.
Furthermore, although symmetric, it can be converted into a public-key setting \cite{zhang2024performance}.
Specifically, two encryptions of zero, denoted as $\mathrm{E}(0)_1$ and $\mathrm{E}(0)_2$, generated with different random values, are published as public parameters. Consequently, any message $m$ can be encrypted by a DH using these public parameters as $\mathrm{E}(m) = (m + r_1 \cdot \mathrm{E}(0)_1 + r_2 \cdot \mathrm{E}(0)_2) \mod \mathcal{N}$, where $r_1, r_2 $ are random integers.

\subsection{Mergeable Count Estimation Sketches}\label{subsec:LC}

Mergeable count estimation sketches~\cite{whang1990linear, flajolet1985probabilistiC, durand2003loglog} are compact probabilistic data structures that estimate the union count, i.e., the number of distinct records in the union of multiple datasets. 
One of the most well-known sketches is the linear counting (LC) sketch~\cite{whang1990linear}, which provides accurate estimates for small union counts. It is suitable for scenarios such as online ride-hailing, where users query the count of nearby cars, and the count is typically small.

The LC sketch comprises $S$ counters, all initialized to 0.
Each data record $o$ is uniformly hashed to one of the counters using a hash function $h$, i.e., $\mathrm{LC}[h(o)]=1$.
Let $S'$ denote the number of zero counters; the estimated count is $RC = -S \ln \frac{S'}{S}$.
To merge multiple sketches, all DHs use the same hash function $h$, ensuring identical records map to the same counters.
The sketches are then merged via the bitwise OR to counters with the same indices.
The union count can be computed through the merged sketch.

%% file: method.tex
\section{PPRC Protocol}
\label{sec:method}
In this section, we present our PPRC protocol for solving the PDRC problem.
We first introduce the basic idea of PPRC in~\cref{sec: basic idea}.
We then describe its two core components, namely \textit{Private Range Predicate (PRP)} and \textit{Oblivious Linear Counting (OLC)}.
The main notations are summarized in~\cref{table: notation}.

\begin{table}[t] 
\centering
\caption{Notations in our protocol}
\vspace{-0.3cm}
\begin{tabular}{ll} 
\toprule
\textbf{Notation} & \textbf{Definition}  \\ \midrule
$R$ & the query range held by the QU  \\
$M$ & the bit length of the Bloom filter  \\
$K$ & the number of hash functions of the Bloom filter \\
$I$ & the number of DHs  \\
$D_i$ & the private dataset held by DH $i$ \\
$D_i^{\ast}$ & the set of records in $D_i$ falling within $R$ \\
$N$ & the number of records in the union of $\{D_i\}_{i=1,\ldots,I}$\\
$S$ & the counter length of the LC sketch \\
$S'$ & the number of counters set to 0 in the LC sketch \\
\bottomrule
\end{tabular} \label{table: notation}
\end{table}

\input{ss_overview}

\input{ss_PPRC}

\input{ss_complexity}

\input{ss_security}

%% file: ss_overview.tex
\subsection{Basic Idea} \label{sec: basic idea}
The goal of PPRC is to enable a query user (QU) to obtain the count of distinct records falling within a rectangular range $R=([x_\ell,x_r], [y_\ell,y_r])$ across multiple data holders (DHs), without leaking $R$ or the DHs’ private datasets, and while scaling to millions of records. Achieving this goal raises two technical challenges.

\textbf{(1) Range evaluation: How can a DH determine, in encrypted form, whether a record $o=(\text{\textsf{id}},x,y)$ lies within the query range $R$ without revealing $R$?} 
The goal is not for the DH to learn the plaintext range-evaluation result, but to compute an encrypted indicator $\mathrm{E}(L)$, where $L=1$ indicates that $o$ lies within $R$ and $L=0$ otherwise.
A direct approach is to encrypt the interval endpoints $[x_\ell, x_r]$ and $[y_\ell, y_r]$ using fully homomorphic encryption (FHE), and then perform secure comparisons \cite{yu2021psafety,zhang2022efficient,ZhangLZZGWSL24} of $x$ with $[x_\ell, x_r]$ and $y$ with $[y_\ell, y_r]$ directly on the encrypted data. However, secure comparisons under FHE involve heavy cryptographic operations such as bit decomposition or polynomial evaluation. 
When applied to millions of records per DH, these operations incur prohibitive overhead, making such solutions impractical in practice.

To overcome this, we propose the \textbf{PRP} technique, which reformulates point-in-range evaluation as secure membership tests and implements them using encrypted Bloom filters. This design replaces costly secure comparisons with homomorphic multiplications, thereby improving efficiency. Concretely, the QU encodes all values in $[x_\ell, x_r]$ into a Bloom filter $\mathrm{BF}_{\mathrm{x}}$ with $K$ hash functions $h_{\mathrm{x}1},\ldots,h_{\mathrm{x}K}$, and similarly encodes $[y_\ell, y_r]$ into $\mathrm{BF}_{\mathrm{y}}$. Each bit in these filters is encrypted individually using FHE, and the encrypted Bloom filters, together with the corresponding hash functions, are sent to the DHs via the central aggregator (CA). 
For a record $o=(\text{\textsf{id}},x,y)$, a DH computes the encrypted $x$-membership indicator by homomorphically multiplying the $K$ ciphertexts corresponding to indices $h_{\mathrm{x}1}(x),\dots,h_{\mathrm{x}K}(x)$:
\[
\mathrm{E}(L_{\mathrm{x}}) \;=\; \prod_{k=1}^K \mathrm{E}\big(\mathrm{BF}_{\mathrm{x}}[h_{\mathrm{x}k}(x)]\big),
\]
and analogously obtains $\mathrm{E}(L_{\mathrm{y}})$ for the $y$-coordinate. The encrypted in-range indicator for ``$o$ lies in the query range'' is then
$$\mathrm{E}(L) \;=\; \mathrm{E}(L_{\mathrm{x}})\cdot \mathrm{E}(L_{\mathrm{y}}),$$
which performs the logical AND of the two membership tests.  
PRP eliminates costly secure comparisons and replaces them with only $O(K)$ homomorphic multiplications per record, while revealing no information about the query range.

\textbf{(2) Deduplicated aggregation: How to compute the range count across overlapping DHs' datasets without leaking any information about these datasets?}  
Each DH now holds encrypted in-range indicators for its records after PRP. A natural method is for each DH to construct an FHE-encrypted Linear Counting (LC) sketch of length $S$, where every in-range record with $\mathrm{E}(L)=\mathrm{E}(1)$ is hashed to a counter. Each DH sends its encrypted sketch to the CA, and the CA homomorphically merges sketches by simulating bitwise OR across counters, and forwards the aggregated encrypted sketch to the QU for decryption and estimation. 

However, this approach faces two major limitations:
(i) \textit{Efficiency}: In FHE, the OR operation corresponds to ciphertext multiplication, which is computationally expensive and becomes infeasible at large scale;
(ii) \textit{Privacy}: If the QU directly obtains the decrypted sketch along with the LC hash function $h$, it can conduct an inference attack. Specifically, for any record $o$, if $h(o)=s$ and the $s$-th counter is $0$, then $o$ does not appear in any DH; otherwise, the QU learns that $o$ exists in at least one DH.

PPRC addresses the limitations by designing the \textbf{OLC} scheme. 
OLC adapts the Linear Counting sketch~\cite{whang1990linear} to the encrypted domain, introducing two key optimizations:
    \textbf{(1) Efficiency optimization.} Instead of costly homomorphic ORs, the CA aggregates LC sketches by element-wise homomorphic addition, a much cheaper operation. The aggregated encrypted sketch can be sent to the QU, who decrypts it, counts the number $S'$ of zero counters, and applies the standard LC estimator to obtain the range count:
\[
RC \;=\; -\,S \ln\!\left(\tfrac{S'}{S}\right).
\]
   \textbf{(2) Privacy optimization.} The above inference attack exploits the fact that counter indices and values directly reveal membership information. To prevent this leakage, the CA applies two lightweight transformations that do not alter the range count result: (i) \emph{counter-index obfuscation}, which permutes the order of encrypted counters so indices cannot be linked to records, and (ii) \emph{non-zero masking}, which multiplies each counter ciphertext by a random non-zero plaintext scalar, preserving zero values while randomizing non-zero magnitudes. These transformations ensure the QU can only learn the final count $RC$, but nothing about which records or DHs contributed to specific counters.

\begin{figure}[t]
  \setlength{\abovecaptionskip}{0.1cm}
  \includegraphics[width=\linewidth]{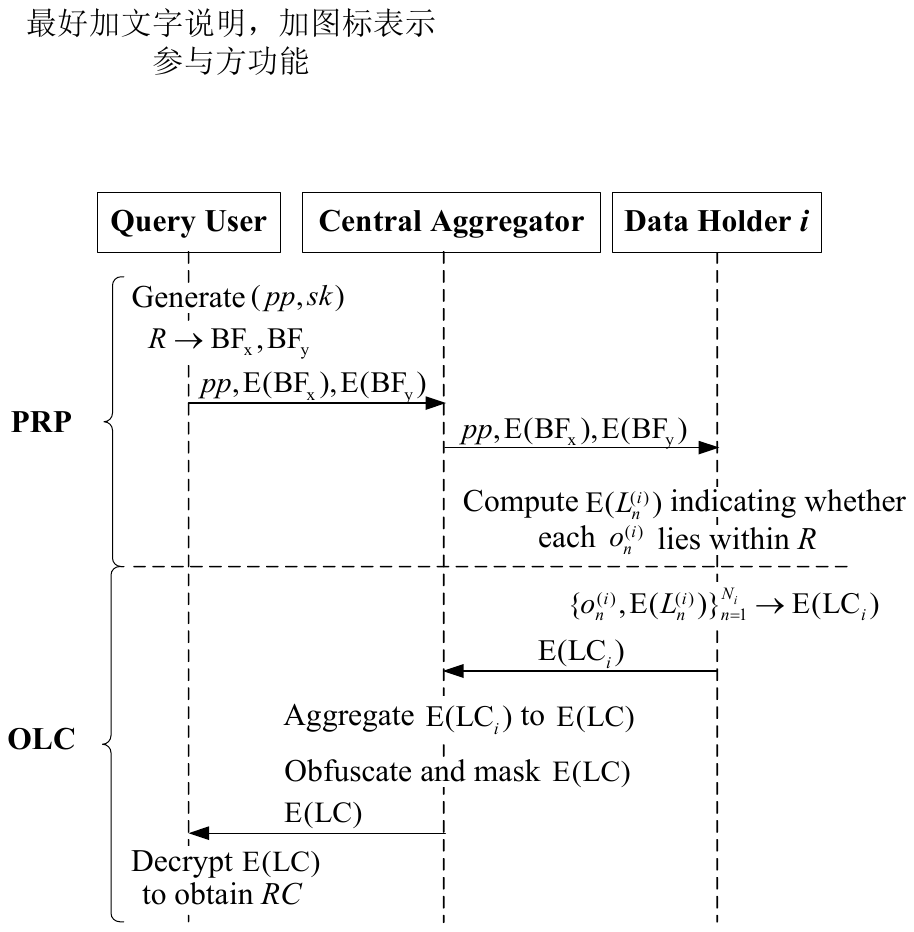}
  \caption{The workflow of the PPRC protocol.}
  \label{fig:protocol_overview}
\end{figure}

In summary, PPRC integrates PRP with OLC to enable efficient, accurate, and privacy-preserving distributed range counting.
PPRC follows a two-stage protocol, as illustrated in \cref{fig:protocol_overview}.
In the first stage, PRP enables each DH to efficiently compute encrypted in-range indicators for its records without learning $R$.
In the second stage, OLC securely aggregates these encrypted results to compute the final range count while correctly handling overlapping records.
We next describe PRP and OLC in detail.

%% file: ss_PPRC.tex
\subsection{Private Range Predicate (PRP)} \label{subsec: PPRC_description}

PRP enables each DH to evaluate whether its records satisfy the query range in encrypted form.
It consists of two steps: \emph{Initialization} and \emph{Execution}.
In the Initialization step, the QU encodes the query range $R$ into encrypted Bloom filters and sends them to DHs.
In the Execution step, each DH performs encrypted range evaluation to compute one encrypted in-range indicator $\mathrm{E}(L_n^{(i)})$ for each record $o_n^{(i)}$, where $L_n^{(i)}=1$ indicates that $o_n^{(i)}$ lies within $R$, and $L_n^{(i)}=0$ otherwise.
These encrypted indicators are then used as the input to OLC for computing the range count.
Next, we detail the two steps.

\begin{algorithm}[t]
	\SetKwInOut{Input}{Input}
	\SetKwInOut{Result}{Result}

	\Input{Query range ($R=[x_\ell, x_r], [y_\ell, y_r])$ held by the QU, where $x_\ell, x_r,y_\ell, y_r \in \mathbb{N}$.}
	\Result{FHE keys $(pp, sk)$, encrypted Bloom filters $\mathrm{E}(\mathrm{BF}_{\mathrm{x}}), \mathrm{E}(\mathrm{BF}_{\mathrm{y}})$, and hash function sets $\mathcal{H}_{\mathrm{x}}, \mathcal{H}_{\mathrm{y}}$.}
    The QU generates the public parameters $pp = (\mathrm{E}(0)_1, \mathrm{E}(0)_2)$, secret key $sk$ for FHE\; \label{seq:set_1}
    The QU generates independent hash functions $\mathcal{H}_{\mathrm{x}} = \{h_{\mathrm{x}1},\dots,h_{\mathrm{x}K}\}$ and $\mathcal{H}_{\mathrm{y}} = \{h_{\mathrm{y}1},\dots,h_{\mathrm{y}K}\}$ for $\mathrm{BF}_{\mathrm{x}}$ and $\mathrm{BF}_{\mathrm{y}}$ constructions, respectively\; \label{seq:set_2_1}
    The QU sets all counters of $\mathrm{BF}_{\mathrm{x}}$ to 0 and $\mathrm{BF}_{\mathrm{y}}$ to 0\;\label{seq:set_2_2}
  \ForEach{$x=x_\ell,x_\ell+1,\ldots,x_r$}{ \label{seq:set_3}
        \For{$k=1, \ldots, K$}{
        $\mathrm{BF}_{\mathrm{x}}[h_{\mathrm{x}k}(x)] = 1$\; \label{seq:set_5}}
}
    \ForEach{$y=y_\ell,y_\ell+1,\ldots,y_r$}{
         \For{$k=1, \ldots, K$}{
        $\mathrm{BF}_{\mathrm{y}}[h_{\mathrm{y}k}(y)] = 1$\;}
}

    \For{$m=1, \ldots, M$}{
Encrypt $\mathrm{BF}_{\mathrm{x}}[m]$ into $\mathrm{E}(\mathrm{BF}_{\mathrm{x}}[m])$\; \label{seq:set_11}
Encrypt $\mathrm{BF}_{\mathrm{y}}[m]$ into $\mathrm{E}(\mathrm{BF}_{\mathrm{y}}[m])$\; \label{seq:set_12}
 
}
     The QU transmits $\mathrm{E}(\mathrm{BF}_{\mathrm{x}}), \mathrm{E}(\mathrm{BF}_{\mathrm{y}}), \mathcal{H}_{\mathrm{x}}, \mathcal{H}_{\mathrm{y}}, pp$ to CA\; \label{line:send}
    CA forwards $\mathrm{E}(\mathrm{BF}_{\mathrm{x}}), \mathrm{E}(\mathrm{BF}_{\mathrm{y}}), \mathcal{H}_{\mathrm{x}}, \mathcal{H}_{\mathrm{y}}, pp$ to $I$ DHs\; \label{line:forward}
	\caption{PRP Initialization}\label{alg:setup_phase}
\end{algorithm}

\noindent \textbf{PRP Initialization.} 
In PRP Initialization, the QU generates encrypted variables and sends them to the CA and DHs.
Specifically, the QU generates the public parameters $pp = (\mathrm{E}(0)_1, \mathrm{E}(0)_2)$ and secret key $sk$ for FHE, two sets of independent hash functions for Bloom filter construction, and two encrypted Bloom filters $\mathrm{E}(\mathrm{BF}_{\mathrm{x}})$, $\mathrm{E}(\mathrm{BF}_{\mathrm{y}})$ representing the query range $R$ in the $x$ and $y$ dimensions. 
The QU then sends $pp$, hash functions, and the encrypted Bloom filters to the CA, which subsequently forwards them to each DH $i=1,\ldots,I$.
The pseudocode is provided in~\cref{alg:setup_phase}.

To construct $\mathrm{E}(\mathrm{BF}_{\mathrm{x}})$ for the $x-$dimension range $[x_\ell,x_r]$, the QU executes the following procedure:

(1) \textit{Setup} (Lines~\ref{seq:set_2_1}-\ref{seq:set_2_2}). All counters $\mathrm{BF}_{\mathrm{x}}[m]$ are set to 0 for $m=1,\ldots,M$, and $K$ independent hash functions $\mathcal{H}_{\mathrm{x}} = \{ h_{\mathrm{x}1},\ldots,h_{\mathrm{x}K} \}$ are generated, each mapping $x$ uniformly to an index in $\{1,\ldots,M \}$.

(2) \textit{Insertion} (Lines~\ref{seq:set_3}-\ref{seq:set_5}).
The QU inserts each $x$ in $\{x_\ell, x_\ell+1, \ldots, x_r\}$ into $\mathrm{BF}_{\mathrm{x}}$.
Concretely, for each $x$, the QU sets $\mathrm{BF}_{\mathrm{x}}[h_{\mathrm{x}k}(x)]=1$ for all $k=1,\ldots,K$.

(3) \textit{Encryption} (Line~\ref{seq:set_11}). Each counter $\mathrm{BF}_{\mathrm{x}}[m]$ is encrypted into $\mathrm{E}(\mathrm{BF}_{\mathrm{x}}[m])$ using $sk$, for all $m=1,\ldots,M$.

The construction of $\mathrm{E}(\mathrm{BF}_{\mathrm{y}})$ follows the same procedure as $\mathrm{E}(\mathrm{BF}_{\mathrm{x}})$.
Together, $\mathrm{BF}_{\mathrm{x}}$ and $\mathrm{BF}_{\mathrm{y}}$ encode the query range in the two coordinate dimensions.
Accordingly, the Bloom filter length $M$ is determined by the encoded range size and the target false-positive rate, and is independent of the DHs' dataset size.

\noindent \textbf{PRP Execution.}
In PRP Execution, each DH $i$ performs range evaluation using the encrypted Bloom filter representation generated from the query range.
For each record $(\text{\textsf{id}}_n^{(i)}, x_n^{(i)}, y_n^{(i)}) \in D_i$, DH $i$ computes an encrypted in-range indicator $\mathrm{E}(L_n^{(i)})$, where $L_n^{(i)}=1$ if the record lies within the query range, and $L_n^{(i)}=0$ otherwise.
This is achieved by reformulating point-in-range testing as secure membership evaluation over the transformed representation.
The procedure is summarized in~\cref{alg:computation_phase}.

\begin{algorithm}[t]
	\SetKwInOut{Input}{Input}
	\SetKwInOut{Result}{Result}

	\Input{Encrypted Bloom filters $\{\mathrm{E}(\mathrm{BF}_{\mathrm{x}}), \mathrm{E}(\mathrm{BF}_{\mathrm{y}})\}$ and the records $\{(x_n^{(i)}, y_n^{(i)})\}_{n=1}^{N_i}$ held by DH $i$.}
	\Result{Encrypted in-range indicators $\{\mathrm{E}(L_n^{(i)})\}_{n=1}^{N_i}$ indicating whether each record lies within $R$.}

 \tcc{Each DH $i$ executes locally:}

 \ForEach{$n=1,\ldots, N_i$}{ \label{seq:check_4}
    $\mathrm{E}(L_{\mathrm{x}n}^{(i)})=\prod_{k=1}^{K} \mathrm{E}(\mathrm{BF}_{\mathrm{x}}[h_{\mathrm{x}k}(x_n^{(i)})])$\;
    $\mathrm{E}(L_{\mathrm{y}n}^{(i)})=\prod_{k=1}^{K} \mathrm{E}(\mathrm{BF}_{\mathrm{y}}[h_{\mathrm{y}k}(y_n^{(i)})])$\;
    $\mathrm{E}(L_n^{(i)})=\mathrm{E}(L_{\mathrm{x}n}^{(i)}) \cdot \mathrm{E}(L_{\mathrm{y}n}^{(i)})$\; \label{seq:check_7}
 }
	\caption{PRP Execution}\label{alg:computation_phase}
\end{algorithm}

Concretely, we exploit homomorphic multiplications to evaluate Bloom filter membership tests.
For each $x_n^{(i)}$, DH $i$ computes: 
$$\mathrm{E}(L_{\mathrm{x}n}^{(i)})=\prod_{k=1}^{K} \mathrm{E}(\mathrm{BF}_{\mathrm{x}}[h_{\mathrm{x}k}(x_n^{(i)})]),$$ 
where $L_{\mathrm{x}n}^{(i)} = 1$ if $x_n^{(i)} \in [x_\ell, \mathrm{x_{r}}]$ and $0$ otherwise.
The intuition is as follows. If $x_n^{(i)} \in [x_\ell, x_r]$, then for each $k=1,\ldots,K$, we have $\mathrm{BF}_{\mathrm{x}}[h_{\mathrm{x}k}(x_n^{(i)})]=1$, and thus $L_{\mathrm{x}n}^{(i)} = 1$. Conversely, if $x_n^{(i)} \notin [x_\ell, x_r]$, then at least one of these positions is $0$, leading to $L_{\mathrm{x}n}^{(i)} = 0$.

The computation for $\mathrm{E}(L_{\mathrm{y}n}^{(i)})$, indicating whether $y_n^{(i)} \in [y_\ell, y_r]$, follows an analogous process, where $L_{\mathrm{y}n}^{(i)} = 1$ indicates $y_n^{(i)} \in [y_\ell, y_r]$ and $L_{\mathrm{y}n}^{(i)} = 0$ otherwise.

Since $(x_n^{(i)}, y_n^{(i)})$ lies within $R$ if and only if $(x_n^{(i)} \in [x_\ell, x_r]) \land (y_n^{(i)} \in [y_\ell, y_r])$, the final encrypted in-range indicator $\mathrm{E}(L_n^{(i)})$ is then obtained as: 
$\mathrm{E}(L_n^{(i)})=\mathrm{E}(L_{\mathrm{x}n}^{(i)}) \cdot \mathrm{E}(L_{\mathrm{y}n}^{(i)}).$

Compared with secure comparisons under FHE, this design transforms the point-in-range evaluation into a lightweight sequence of encrypted Bloom filter lookups. The resulting encrypted in-range indicators require only $O(K)$ homomorphic multiplications per record, making the protocol scalable in practice.

\noindent \textit{Running Example.}
\cref{fig:PPRC1} illustrates the PRP with a simple example for the $x$-coordinate.
Suppose QU encodes the range $\{x_1, x_2\}$ into an encrypted Bloom filter $\mathrm{E}(\mathrm{BF}_{\mathrm{x}})$ and sends it to DHs.
To determine whether $x_1^{(i)}$ lies within the range, DH $i$ computes the hash positions $h_{\mathrm{x}1}(x_1^{(i)})$ and $h_{\mathrm{x}2}(x_1^{(i)})$, 
retrieves the corresponding encrypted bits $\{ \mathrm{E}(1), \mathrm{E}(1)\}$ from $\mathrm{E}(\mathrm{BF}_{\mathrm{x}})$, and performs homomorphic multiplication:
$\mathrm{E}(L_{\mathrm{x}1}^{(i)}) = \mathrm{E}(1) \times \mathrm{E}(1) = \mathrm{E}(1)$,
indicating that $x_1^{(i)}$ is within the range.
For another record $x_2^{(i)}$, one of the retrieved bits equals $\mathrm{E}(0)$, yielding 
$
\mathrm{E}(L_{\mathrm{x}2}^{(i)})  = \mathrm{E}(0),
$
which shows that $x_2^{(i)}$ is outside the range.
The same procedure is then performed on the $y$-dimension, and a record is considered within the query range only if both dimensions satisfy their corresponding conditions.

\begin{figure}[tp]
  \setlength{\abovecaptionskip}{0.1cm}
  \includegraphics[width=\linewidth]{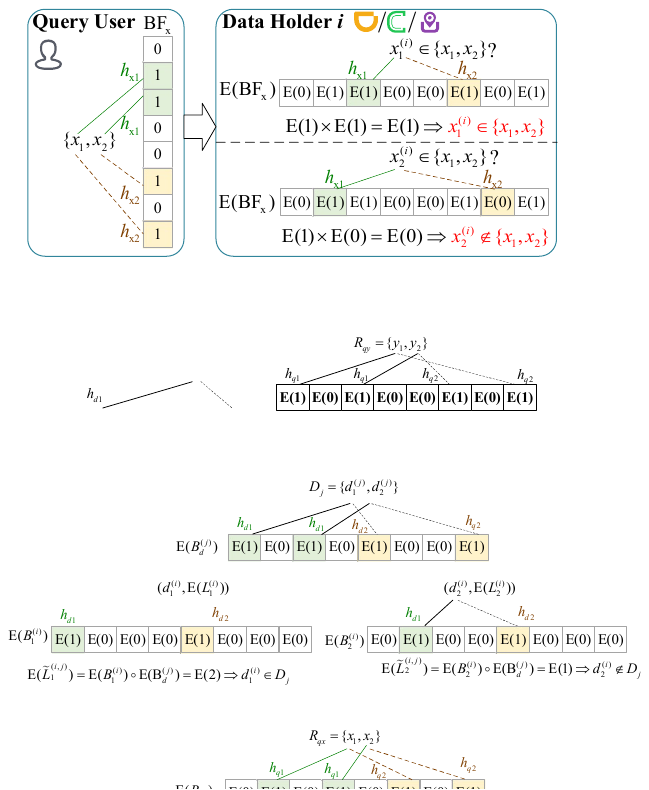}
  \caption{ Example for checking whether $x_n^{(i)}$ lies within $ \{x_1,x_2\}$ without revealing the query range. The length of the Bloom filter is $M=8$, and the number of hash functions is $K=2$. For clarity, only the procedure for the $x$-coordinate is illustrated; the $y$-coordinate is processed analogously.}
  \label{fig:PPRC1}
\end{figure}

\subsection{Oblivious Linear Counting (OLC)} \label{subsec: PPRC_OLC}
During OLC, each DH locally constructs an encrypted LC sketch based on its records and PRP-generated encrypted indicators, the CA securely aggregates the encrypted sketches from all $I$ DHs, and the QU computes the final range count $RC$ from the aggregated result.
The pseudocode of OLC is given in~\cref{alg:pprc_aggregation},  which consists of three steps: \textit{Local Encoding}, \textit{Secure Aggregation}, and \textit{Decryption \& Estimation}.

(1) \textit{Local Encoding} (Lines~\ref{seq:agg_pprc_plus_1}-\ref{seq:agg_pprc_plus_5}). 
Each DH $i$ initializes its sketch $\mathrm{LC}_i$ with all counters set to 0, and encrypts it into $\mathrm{E}(\mathrm{LC}_i)$ using the public parameters $pp$.
Besides, all DHs agree on a global public hash function $h$ that maps each record uniformly into $\{1,\ldots,S\}$, where $S$ is the counter length of the sketch.
The length $S$ is a tunable parameter determined by the target estimation accuracy and the expected range count.
Each DH $i$ then encodes its records into the encrypted sketch by leveraging the encrypted in-range indicators $\mathrm{E}(L_n^{(i)})$ derived from PRP, where $L_n^{(i)}=1$ if record $o_n^{(i)}$ lies within range $R$ and $0$ otherwise.
Specifically, the sketch is updated as:
$$\mathrm{E}(\mathrm{LC}_i[h(o_n^{(i)})]) = \mathrm{E}(\mathrm{LC}_i[h(o_n^{(i)})]) + \mathrm{E}(L_n^{(i)}).$$
For each record $o_n^{(i)}$, if $L_n^{(i)}=1$, the corresponding counter is incremented by one; otherwise, it remains unchanged. Therefore, \textbf{only in-range records} contribute to the sketch, enabling the LC sketch to compute the range count.
Multiple in-range records may hash to the same counter, which is expected in Linear Counting and enables compact representation using a fixed-length sketch.

\begin{algorithm}[t]
	\SetKwInOut{Input}{Input}
	\SetKwInOut{Result}{Result}

	\Input{Dataset $\{(o_n^{(i)}, \mathrm{E}(L_n^{(i)}))\}_{n=1}^{N_i}$ held by DH $i$, where $i=1,\ldots,I$. }
	\Result{The QU obtains the range count $RC$.}

    \tcc{Step 1: Local Encoding by DHs}
    All DHs agree on the global public hash function $h:o\rightarrow\{1,\ldots,S\}$ for sketch construction\; \label{seq:agg_pprc_plus_1}
    \ForEach{$\mathrm{DH}$ $i=1,\ldots,I$}{
    \ForEach{$s=1,\ldots,S$}{
    Initialize the sketch counter $\mathrm{LC}_i[s]=0$\;
    Encrypt $\mathrm{LC}_i[s]$ into $\mathrm{E}(\mathrm{LC}_i[s])$ using $pp$;
  }

    \ForEach{$n=1,\ldots,N_i$}{
    $\mathrm{E}(\mathrm{LC}_i[h(o_n^{(i)})]) = \mathrm{E}(\mathrm{LC}_i[h(o_n^{(i)})]) + \mathrm{E}(L_n^{(i)})$\; \label{seq:agg_pprc_plus_5}
 }
 }

 \tcc{Step 2: Secure Aggregation by CA}
 Each DH $i$ sends its $\mathrm{E}(\mathrm{LC}_i)$ to the CA\; \label{seq:agg_pprc_plus_6}
 
 \ForEach{$s=1,\ldots,S$}{
     Aggregate encrypted sketches: $\mathrm{E}(\mathrm{LC}[s]) = \sum_{i=1}^{I} \mathrm{E}(\mathrm{LC}_i[s])$\;
     Generate a plaintext random integer $r$\;
     Compute $\mathrm{E}(\mathrm{LC}[s]) = \mathrm{E}(\mathrm{LC}[s]) \cdot r$\;
 }
 Shuffle encrypted counters in $\mathrm{E}(\mathrm{LC})$\; \label{seq:agg_pprc_plus_11}
 
 \tcc{Step 3: Decryption \& Estimation by QU}
 The CA sends $\mathrm{E}(\mathrm{LC})$ to the QU\; \label{seq:agg_pprc_plus_12}
 The QU decrypts $\mathrm{E}(\mathrm{LC})$ into $\mathrm{LC}$ using $sk$\;
 Count the number of zero-value counters $S'$ in $\mathrm{LC}$\;
 Compute the range count: $RC = -S\log{\frac{S'}{S}}$\; \label{seq:agg_pprc_plus_15}
	\caption{Oblivious Linear Counting (OLC)}\label{alg:pprc_aggregation}
\end{algorithm}

(2) \textit{Secure Aggregation} (Lines~\ref{seq:agg_pprc_plus_6}-\ref{seq:agg_pprc_plus_11}). Each DH $i$ transmits its encrypted sketch $\mathrm{E}(\mathrm{LC}_i)$ to the CA.
The CA then aggregates counters with the same index across all encrypted sketches: 
$\mathrm{E}(\mathrm{LC}[s]) = \sum_{i=1}^{I} \mathrm{E}(\mathrm{LC}_i[s]), s=1,\ldots,S.$

To mitigate information leakage from decrypted counters, the CA applies two lightweight transformations that preserve the standard LC estimator:
(i) \textbf{counter-index obfuscation}, which permutes the order of counters to break linkage between indices and records; and
(ii) \textbf{non-zero masking}, which multiplies each non-zero counter by a random non-zero plaintext integer, preserving zeros while randomizing magnitudes.
These lightweight transformations preserve the number of zero counters and therefore do not affect the LC estimator, which depends solely on the count of zeros.
As a result, the QU can correctly recover the range count while gaining no information about individual record membership or the contributions of specific DHs, providing robustness against inference attacks with minimal computational overhead. 

(3) \textit{Decryption \& Estimation} (Lines~\ref{seq:agg_pprc_plus_12}-\ref{seq:agg_pprc_plus_15}). 
The CA transmits $\mathrm{E}(\mathrm{LC})$ to the QU, who decrypts it into $\mathrm{LC}$ using the secret key $sk$.
The QU then counts the number of zero counters $S'$ in the decrypted sketch.
Following the standard Linear Counting estimator \cite{whang1990linear}, the fraction of zero counters $\frac{S'}{S}$
approximates the probability that a counter remains unoccupied after all in-range records are inserted.
Therefore, the number of inserted records, i.e., the range count, is estimated as
$$RC=-S\ln{\frac{S'}{S}}.$$
This approach enables the QU to obtain the range count without learning any individual record, thus preserving data privacy.

\noindent \textit{Running Example.}
\cref{fig:PPRC_agg} illustrates the workflow of the OLC using an example with sketch length $S=8$. 
Each DH $i$ locally constructs an encrypted sketch $\mathrm{E}(\mathrm{LC}_i)$, which encodes only the in-range records. 
The CA then performs secure aggregation by summing the encrypted counters across DHs, followed by counter-index obfuscation and non-zero masking, ensuring that the QU cannot infer any information about DHs' datasets even after decryption. 
Finally, the QU decrypts the aggregated sketch.
After decryption, the QU observes $S'=3$ zero counters out of $S=8$.
This means that $5$ counters have been occupied by hashed in-range records.
Applying the LC estimator gives
$RC=-8\ln(\frac{3}{8})$,
which is the estimated range count.

\section{PPRC Analysis} \label{sec:analysis}
In this section, we formally analyze the accuracy, efficiency, and security of the PPRC protocol.

\subsection{\textbf{Accuracy Analysis}}
\begin{figure}[t]
  \setlength{\abovecaptionskip}{0.1cm}
  \includegraphics[width=\linewidth]{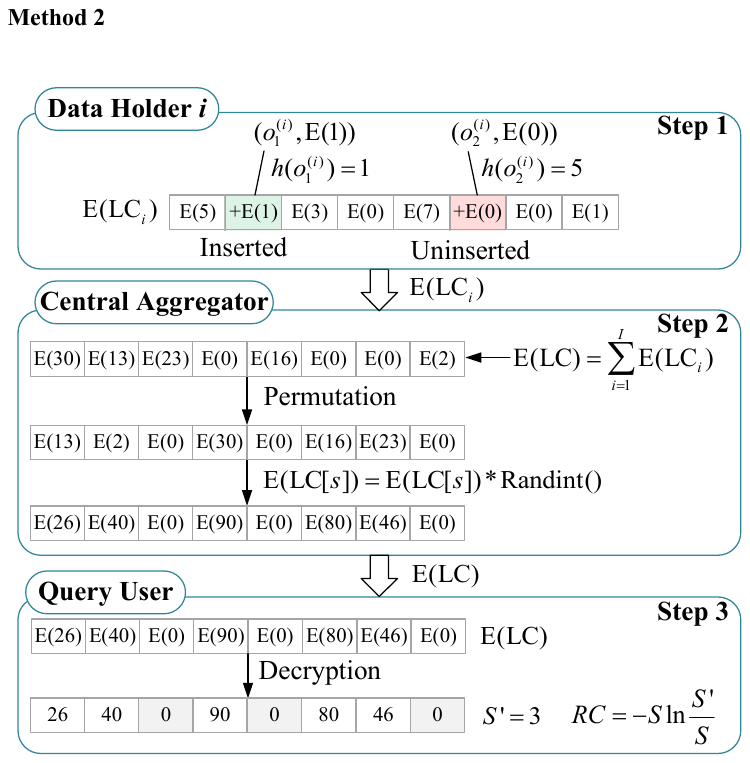}
  \caption{Example for the OLC in the PPRC protocol. Here, the length of LC sketches $S$ is set to 8.}
  
  \label{fig:PPRC_agg}
\end{figure}

\begin{theorem}
Suppose the true range count is $RC$, the estimated value is $\hat{RC}$,
and the false positive rate of the Bloom filters is $f_p$.
Define $\beta=(f_p)^2$, $t=\frac{\beta(N-RC)}{S}$, and
$w=(1-\frac{1}{S})^{RC}$, where $N$ is the number of records in the union of $\{D_i\}_{i=1,\ldots,I}$ and $S$ is the counter length of the LC sketch.
Then the estimator used in PPRC satisfies
\begin{equation}
\mathrm{Bias}(\frac{\hat{RC}}{RC})=\mathbb{E}(\frac{\hat{RC}}{RC}-1)=\frac{2\beta N+w^{-1}e^{-t}-1}{2RC},
\end{equation}

\begin{equation}
\mathrm{StdError}(\frac{\hat{RC}}{RC})= \mathrm{Std}(\frac{\hat{RC}-RC}{RC})=\frac{\sqrt{S(\frac{e^t}{w}-1)}}{RC}.
\end{equation}
\end{theorem}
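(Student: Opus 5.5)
Model the number of records actually inserted into the merged sketch, then apply the classical Whang et al.\ Linear Counting analysis. After PRP, a record of $D$ lying in $R$ always yields $L=1$, since Bloom filters have no false negatives. A record outside $R$ yields $L=1$ only if both its $x$- and $y$-membership tests are false positives. Treating the two filters as independent, each with false-positive rate $f_p$, an out-of-range record is inserted with probability $\beta=f_p^2$. Duplicates across DHs hash to the same counter under the shared $h$, so they do not create extra occupied counters. Summation followed by non-zero masking and shuffling preserves exactly which counters are zero, so $S'$ has the same distribution as the number of empty cells in a plaintext OR-merged LC sketch. Thus the effective number of distinct inserted items is $n=RC+X$, where $X\sim\mathrm{Bin}(N-RC,\beta)$.

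\textbf{Step 1: occupancy.} Condition on $X$ and compute the moments of the zero fraction $V=S'/S$:
\[
\mathbb{E}[V]=\mathbb{E}\big[(1-\tfrac1S)^{RC+X}\big]=w\,\big(1-\tfrac{\beta}{S}\big)^{N-RC}\approx w e^{-t},
\]
using the binomial generating function. Following Whang's derivation, I would use the Poisson-type approximation for the variance of the empty-cell count, $\mathrm{Var}(S')\approx S\,\mathbb{E}[V](1-(1+\lambda)\mathbb{E}[V])$. To land on the stated closed forms, I would keep the leading term and write $\mathrm{Var}(V)\approx \mathbb{E}[V](1-\mathbb{E}[V])/S$.

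\textbf{Step 2: delta method.} Apply a second-order Taylor expansion of $g(V)=-S\ln V$ around $\mu=\mathbb{E}[V]=we^{-t}$:
\[
\mathbb{E}[\hat{RC}]\approx -S\ln\mu+\frac{S}{2\mu^2}\mathrm{Var}(V),\qquad \mathrm{Var}(\hat{RC})\approx \frac{S^2}{\mu^2}\mathrm{Var}(V).
\]
Substituting $\mathrm{Var}(V)=\mu(1-\mu)/S$ gives $\mathrm{Var}(\hat{RC})\approx S(\mu^{-1}-1)=S(e^t/w-1)$. Dividing the square root by $RC$ yields the StdError formula. For the bias, the second-order term is $(\mu^{-1}-1)/2=(w^{-1}e^{t}-1)/2$. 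The first-order term is $-S\ln\mu=-S\ln w+St$. With $-S\ln w\approx RC$ and $St=\beta(N-RC)$, the systematic excess is of order $\beta N$, and dividing by $RC$ gives the shape of the claimed expression.

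\textbf{Main obstacle.} The hard part is reconciling the approximations with the exact constants in the statement. Two of them need care: the factor $2\beta N$ in the bias numerator, and whether $e^{-t}$ or $e^{t}$ appears in the second-order term. I expect these to depend on the author's chosen convention, for example whether $-S\ln w$ is replaced by $RC$ and how $\beta RC$ terms are absorbed. I would first try to reproduce them by writing the bias as $\mathbb{E}[\hat{RC}]-RC$ with exact $-S\ln w$, collecting the false-positive contribution and the curvature term separately. If necessary, I would adopt the same first-order approximations, $\ln(1-1/S)\approx -1/S$ and $(1-\beta/S)^{N-RC}\approx e^{-t}$, and state explicitly that the equalities hold to leading order, as in Whang et al.
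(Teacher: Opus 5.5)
Your plan follows the paper's own proof sketch step for step: $RC$ plus a $\mathrm{Bin}(N-RC,\beta)$ number of doubly-false-positive records, the occupancy moments $\mathbb{E}(S')\approx Swe^{-t}$ and $\mathrm{Var}(S')\approx Swe^{-t}-Sw^2e^{-2t}$, then a second-order (bias) and first-order (variance) Taylor expansion of $-S\ln V$; the paper makes the same $f_p^2$ simplification, even though an out-of-range record with one coordinate inside the range actually passes with probability $f_p$. The constants you flagged do not come from finer bookkeeping in the paper: it takes $\mathbb{E}(r)=\beta N$ instead of $\beta(N-RC)$ and derives $\mathbb{E}(\hat{RC})\approx RC+\beta N+(w^{-1}e^{t}-1)/2$, which has your $e^{t}$, so the $e^{-t}$ in the stated bias is apparently a typo, and the difference $w^{-1}(e^{t}-e^{-t})\approx 2t/w$ is negligible next to $2\beta N$.
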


\begin{proof}[Proof Sketch]
After PRP execution, the number of objects identified within the query range is $RC+r$, where $r$ denotes the number of false positives introduced by the two independent Bloom filters.
Since each filter has a false positive rate $f_p$, we have $\mathbb{E}(r)=(f_{p})^2N=\beta N$.
Besides, $r$ follows a binomial distribution: $r\sim \binom{N-RC}{r}(\beta)^r (1-\beta)^{N-RC-r}$.

During OLC, PPRC uses the LC sketch estimator for $RC+r$ elements.
Let $S'$ denote the number of zero counters in the aggregated sketch.
Based on the standard analysis of the LC sketch, we obtain
$
\mathbb{E}(S')
=
Sw\left(1-\frac{\beta}{S}\right)^{N-RC}
\approx
Swe^{-t},
$
where $w=(1-\frac{1}{S})^{RC}$ and
$t=\frac{\beta(N-RC)}{S}$.
Similarly, we obtain
$
\mathrm{Var}(S') =\mathbb{E}((S')^2)-(\mathbb{E}(S'))^2
\approx
Swe^{-t}-Sw^2e^{-2t}.
$

Let $V=S'/S$ and define $f(V)=-\ln V$.
Applying a second-order Taylor expansion to the LC estimator
$\hat{RC}=Sf(V)$ around $\mathbb{E}(V)$ yields
$
\mathbb{E}(\hat{RC})
\approx
RC+\beta N+\frac{w^{-1}e^t-1}{2}.
$
Therefore, $\mathrm{Bias}(\frac{\hat{RC}}{RC})=\mathbb{E}(\frac{\hat{RC}}{RC}-1)=\frac{2\beta N+w^{-1}e^{-t}-1}{2RC}$.

Using the first-order approximation further yields
$
\mathrm{Var}(\hat{RC})
\approx
\frac{S(e^t/w-1)}{RC^2}.
$
Then, taking the square root yields the standard error: $\mathrm{StdError}(\frac{\hat{RC}}{RC})=\frac{\sqrt{S(\frac{e^t}{w}-1)}}{RC}$.

Due to space limitations, the detailed derivation of the accuracy analysis is available in our public repository.\footnote{\url{https://github.com/jackson-maybe/PPRC}}

\end{proof}

%% file: ss_complexity.tex
\subsection{Protocol Efficiency} \label{protocol_efficiency}

\begin{table}[t]
\caption{Theoretical computation cost, communication cost, and communication rounds of PPRC.}
\vspace{-0.2cm}
\centering
\setlength{\tabcolsep}{2.5pt}
\renewcommand{\arraystretch}{1.2}
\resizebox{1.0\linewidth}{!}{
\begin{tabular}{c|ccc|ccc|c}
\hline
\multirow{2}{*}{\textbf{Component}} &
\multicolumn{3}{c|}{\textbf{\#Ciphertext Mults}} &
\multicolumn{3}{c|}{\textbf{Data Sent}} &
\multirow{2}{*}{\textbf{Rounds}} \\
 & \textbf{QU} & \textbf{DH} & \textbf{CA} &
 \textbf{QU} & \textbf{DH} & \textbf{CA} & \\
\hline
PRP & -- & $O(K\overline{N})$ & -- & $O(M)$ & -- & $O(IM)$ & 2 \\
OLC & -- & -- & -- & -- & $O(S)$ & $O(S)$ & 2 \\
\hline
\textbf{Total} & -- & $O(K\overline{N})$ & -- & $O(M)$ & $O(S)$ & $O(IM+S)$ & 4 \\
\hline
\end{tabular}
}
\label{tab:theoretical_costs}
\end{table}

We theoretically analyze the computation cost, communication cost, and communication rounds of PPRC in Table~\ref{tab:theoretical_costs}.
The analysis is organized by the two protocol components, \textit{PRP} and \textit{OLC}, and further specifies the workloads of the three entities: the QU, a single DH, and the CA.
Since ciphertext--ciphertext multiplications (Mul-\uppercase\expandafter{\romannumeral1} operations defined in \cref{subsec:HE}) dominate the runtime, we use their counts (\#Ciphertext Mults) to measure the computational cost, while omitting other lower-overhead operations from the asymptotic analysis.
Communication cost is measured by the total amount of transmitted data, while communication rounds are measured by the number of message exchanges.

\noindent\textit{PRP.} 
The technique consists of the PRP Initialization and PRP Execution steps.
In PRP Initialization, the QU encrypts the Bloom filter representation of the query range and sends it to the CA, incurring $O(M)$ communication, where $M$ denotes the bit length of Bloom filters.
The CA then forwards the encrypted Bloom filters to the $I$ DHs, resulting in $O(IM)$ data sent.
No ciphertext multiplications are required in this step.

In the PRP Execution, each DH locally determines whether its records fall within the query range using the encrypted Bloom filters.
For each record, the DH requires $O(K)$ ciphertext multiplications, as shown in \cref{alg:computation_phase}.
Hence, the total computational cost is $O(K\overline{N})$ ciphertext multiplications per DH, where $\overline{N}$ denotes the average number of records among the $I$ DHs.
Since this step is performed entirely locally, it incurs no communication cost.
Consequently, considering the communication flow in both steps, PRP requires two communication rounds: QU$\rightarrow$CA$\rightarrow$DH.

\noindent\textit{OLC.} 
After PRP, each DH securely encodes its in-range records into an encrypted LC sketch $\mathrm{E}(\mathrm{LC}_i)$ of size $O(S)$ and sends it to the CA.
The CA then aggregates these encrypted sketches through element-wise addition to obtain the encrypted sketch $\mathrm{E}(\mathrm{LC})$ of size $O(S)$, which is then sent to the QU for decryption.
Benefiting from the optimized aggregation design, OLC requires no ciphertext multiplications and incurs only linear communication cost in $S$.
Following this transmission pattern, OLC also requires two communication rounds: DH$\rightarrow$CA$\rightarrow$QU.

\noindent\textit{Overall.} 
The total computational cost is thus dominated by $O(K\overline{N})$ ciphertext multiplications at the DH, and the total communication cost is $O(IS+IM)$.
Combining the interaction patterns of PRP and OLC, the protocol requires four communication rounds.

%% file: ss_security.tex
\subsection{Protocol Security} \label{protocol_security}

\begin{theorem} \label{theorem:secure_analysis}
When a semi-honest adversary $\mathcal{A}$ corrupts up to $I$ DHs and the CA, PPRC guarantees that $\mathcal{A}$ cannot access any information about the final range count result, the query range, or the private datasets of uncorrupted DHs.
\end{theorem}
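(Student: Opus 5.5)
We prove the statement in the standard simulation paradigm for semi-honest adversaries. We build a simulator $\mathcal{S}$ that receives only the inputs of the corrupted parties: the datasets $\{D_i\}_{i\in\mathcal{C}}$ of the corrupted DHs, and nothing about $R$, $RC$, or the uncorrupted datasets. From these inputs alone, $\mathcal{S}$ must output a view computationally indistinguishable from the joint real view of the CA and the corrupted DHs. The key structural fact is that the QU never colludes with the CA or any DH, so $sk$ is never in the adversary's hands. Every protocol message the adversary observes is therefore either public information or an FHE ciphertext. These public items are the hash families $\mathcal{H}_{\mathrm{x}},\mathcal{H}_{\mathrm{y}}$, the LC hash $h$, $pp$, and the sizes $M,K,S,I$.

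First we list the real view message by message.
\begin{enumerate}
\item The CA and each corrupted DH receive $\mathrm{E}(\mathrm{BF}_{\mathrm{x}}),\mathrm{E}(\mathrm{BF}_{\mathrm{y}}),\mathcal{H}_{\mathrm{x}},\mathcal{H}_{\mathrm{y}},pp$ (Algorithm~\ref{alg:setup_phase}).
\item The corrupted DHs locally compute $\mathrm{E}(L_n^{(i)})$ and $\mathrm{E}(\mathrm{LC}_i)$, which are deterministic or publicly randomized functions of the received ciphertexts and their own data.
\item The CA receives $\mathrm{E}(\mathrm{LC}_j)$ from every DH $j$, including the honest ones. It then adds these, masks them with its own randomness $r$, shuffles with its own permutation, and sends $\mathrm{E}(\mathrm{LC})$ to the QU. The adversary never receives anything back from the QU.
\end{enumerate}

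The simulator proceeds as follows. It samples fresh hash functions, generates its own key pair $(pp',sk')$, and produces $2M$ encryptions of $0$ in place of the two Bloom filters. It runs the corrupted DHs' code honestly on these dummy ciphertexts. For each honest DH $j$, it outputs $S$ fresh encryptions of $0$ as $\mathrm{E}(\mathrm{LC}_j)$. Finally, it runs the CA's aggregation, masking and shuffling on these values.

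Indistinguishability follows by a hybrid argument. We replace the real ciphertexts one group at a time: first $\mathrm{E}(\mathrm{BF}_{\mathrm{x}}[m])$ and $\mathrm{E}(\mathrm{BF}_{\mathrm{y}}[m])$, then the honest DHs' sketches. Each step is justified by a reduction to the CPA security of the SHE scheme recalled in \cref{subsec:HE}. A distinguisher between adjacent hybrids yields a CPA adversary, which embeds its challenge ciphertext and computes all remaining messages itself using the public parameters. All derived ciphertexts, namely the products $\mathrm{E}(L_n^{(i)})$, the sums, the masked counters, and $\mathrm{E}(\mathrm{LC})$, are public homomorphic functions of the replaced ciphertexts and the adversary's own inputs. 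Hence once the base ciphertexts are simulated, the derived values are distributed identically in the two worlds. From the final hybrid we read off all three claims: the view is independent of $R$, since the Bloom filter bits are hidden; it is independent of the honest datasets; and it is independent of $RC$, since $\mathrm{E}(\mathrm{LC})$ and hence $S'$ is never decrypted by the adversary.

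The main obstacle is the CPA reduction under the public-key conversion of the symmetric SHE scheme. The adversary holds $pp=(\mathrm{E}(0)_1,\mathrm{E}(0)_2)$ and can generate further ciphertexts, so the CPA game must be one in which the reduction can supply $pp$ and answer encryption queries. We will assume, citing \cite{zhang2024performance}, that the converted scheme is IND-CPA secure in the public-key sense. We will also argue that homomorphic evaluation does not require any evaluation key beyond $\mathcal{N}$, so the reduction can simulate all Mul-I and Add operations. A second, lesser subtlety is that colluding DHs may share ciphertexts, for example by comparing each other's $\mathrm{E}(L_n^{(i)})$. This is covered because the hybrid replaces the Bloom filters globally before any DH computation, so all corrupted parties see one consistent simulated world.
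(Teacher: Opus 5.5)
Your route is the paper's route: simulate the coalition's view by replacing the Bloom-filter ciphertexts and the honest DHs' sketches with encryptions of $0$, and appeal to CPA security. Your hybrid ordering and your treatment of colluding DHs are more careful than the paper's sketch. However, two steps do not go through as written. The paper's proof passes over the same two points, so these are gaps you inherit rather than introduce.

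First, the simulator is supposed to receive nothing about $R$, yet it must output $2M$ ciphertexts. In PPRC, $M$ is not a system constant: the paper computes it from the encoded range size and $f_p$. The experiments report the encrypted filter growing from 1.09\,MB to 5.47\,MB as the side length grows from 1\,km to 5\,km. So the length of the adversary's view already reveals the side length of $R$, and $M$ does not belong on your list of public items. To obtain the statement as claimed, you must either fix $M$ publicly (size the filters for a public maximum range width; smaller ranges only lower the false-positive rate), or give $M$ to the simulator as explicit leakage and weaken the theorem.

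Second, the hybrid that swaps the honest DHs' sketches for fresh encryptions of $0$ is not a CPA step. Each honest counter is $\mathrm{E}_{pp}(0)+\sum_{n:\,h(o_n^{(j)})=s}\mathrm{E}(L_n^{(j)})$, a homomorphic function of the Bloom-filter ciphertexts the adversary itself holds. In the two adjacent hybrids it encrypts the same plaintext, so there is no challenge to embed. What the step needs is that an evaluated ciphertext, re-randomized by adding a $pp$-encryption of zero, is indistinguishable from a fresh one even to a distinguisher who knows $D_j$; this is re-randomizability, or circuit privacy. CPA security does not give this. Your remark that derived ciphertexts are public functions of the replaced ciphertexts and the adversary's own inputs covers only the corrupted parties' computations.

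For the SHE instantiation the property in fact fails. A $pp$-encryption of zero is $r_1\mathrm{E}(0)_1+r_2\mathrm{E}(0)_2 \bmod \mathcal{N}$ with $r_1,r_2$ much smaller than $\sqrt{\mathcal{N}}$, as decryption modulo $p$ requires. Such values can be recognized efficiently by a two-dimensional closest-vector computation in the lattice $\{(a,b)\in\mathbb{Z}^2 : a\mathrm{E}(0)_1+b\mathrm{E}(0)_2\equiv 0 \pmod{\mathcal{N}}\}$. A corrupted CA can therefore tell which counters of $\mathrm{E}(\mathrm{LC}_j)$ received no additions, and so learn which buckets of $D_j$ are empty under the public $h$. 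The same test distinguishes $pp$-encryptions of different plaintexts, so the public-key IND-CPA assumption you planned to import is not available either. You do not need it for the Bloom-filter step: a symmetric CPA game with an encryption oracle suffices, since the reduction can obtain $pp$ by querying $0$ twice. Closing the honest-sketch step requires an explicit re-randomization assumption, or a protocol change that makes the DHs' masking statistically hiding; CPA alone will not do it.
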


\proof
The proof follows the standard simulation-based security definition for semi-honest adversaries.
The definition demonstrates that if a party can simulate the adversary's view without access to any private inputs, the protocol preserves the confidentiality of private inputs without any information leakage.

Let $Sim_{\mathcal{A}}$ denote a simulator that simulates the view of the adversary $\mathcal{A}$ during protocol execution.
During PRP, the adversary $\mathcal{A}$ observes only the encrypted query Bloom filters,
$\mathrm{E}(\mathrm{BF}_{\mathrm{x}})$ and $\mathrm{E}(\mathrm{BF}_{\mathrm{y}})$, and the encrypted indicators computed locally by DHs.
Due to the CPA-security of the FHE scheme~\cite{journals/iotj/MahdikhaniLZSG20}, all these ciphertexts are computationally indistinguishable from encryptions of random values.
Therefore, a simulator $Sim_{\mathcal{A}}$ can efficiently generate an indistinguishable view without access to the underlying plaintexts.
Furthermore, since each encrypted indicator $\mathrm{E}(L_n^{(i)})$ hides its plaintext, $\mathcal{A}$ cannot determine whether it encrypts $0$ or $1$, nor test whether two ciphertexts encrypt the same plaintext.
As a result, $\mathcal{A}$ cannot distinguish which records lie within the query range and which do not, and thus obtain no information about the query range from the encrypted indicators.
In the OLC, each DH $i$ constructs an encrypted LC sketch $\mathrm{E}(\mathrm{LC}_i)$ and the CA aggregates these encrypted sketches. During this process, the adversary $\mathcal{A}$ receives encrypted LC sketches from uncorrupted DHs.
It is easy for $Sim_{\mathcal{A}}$ to generate values computationally indistinguishable from these ciphertexts based on the CPA-security of the SHE scheme.

Thus, the simulator $Sim_{\mathcal{A}}$ can simulate the adversary's view without private inputs, proving PPRC satisfies simulation-based security. Therefore, a semi-honest adversary corrupting up to $I$ DHs and the CA gains no private information.
\endproof

%% file: results.tex
\section{Evaluation} \label{sec:results}

\input{exp_setup}

\input{exp_results}

%% file: exp_setup.tex
\subsection{Experimental Setup}

\noindent \textbf{Datasets}. We evaluate PPRC on three real-world datasets: Yelp~\cite{dataset1}, Brightkite, and Gowalla~\cite{cho2011friendship}, as well as a synthetic dataset for scalability analysis.

\bullethdr{Real-world Datasets.}
Yelp is a business dataset containing merchant locations across multiple regions, and we extract 21,900 geographic records in Florida from Yelp.
Brightkite and Gowalla are location-based social network datasets with 115,383 and 196,561 user check-ins.

\bullethdr{Synthetic Dataset.} 
We generate synthetic locations uniformly within the region of San Francisco (latitude: 37.5--37.9, longitude: $-$122.6--$-$122.2).
The dataset size varies from $10^4$ to $10^7$ records to evaluate scalability.

Following \cite{zhang2022efficient}, we scale locations to integers, where each integer represents a 20m line, and each area corresponds to a 20m $\times$ 20m grid. 
Query ranges are randomly generated as rectangular regions.
We use $2\,\mathrm{km} \times 2\,\mathrm{km}$ ranges as the default setting, and evaluate the impact of different range sizes in \cref{subsec:prange}.
To simulate distributed data settings, each of the $I$ DHs receives a randomly sampled fraction $P$ of the global dataset as its private dataset.
We set $P \in [0.05,0.1,0.15,0.2]$ and $I \in [5,10,15,20]$.

Due to space limitations, additional results and implementation details are available at our public repository.

\vspace{0.1cm}
\noindent \textbf{Metrics}. Our experiments evaluate both accuracy and efficiency.

\begin{figure*}[ht]
  \centering
  \setlength{\abovecaptionskip}{0.3cm}
  \includegraphics[width=1.0\linewidth]{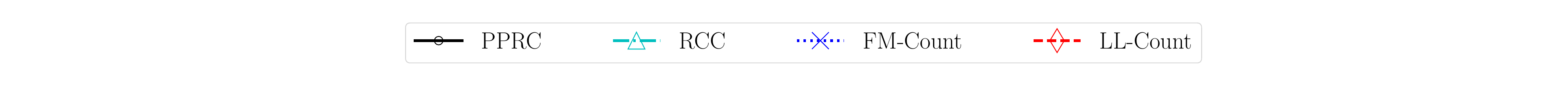}\\
  \vspace{-0.53cm}
  \subfloat[MAE vs. $I$, Yelp]{%
    \includegraphics[width=.25\linewidth]{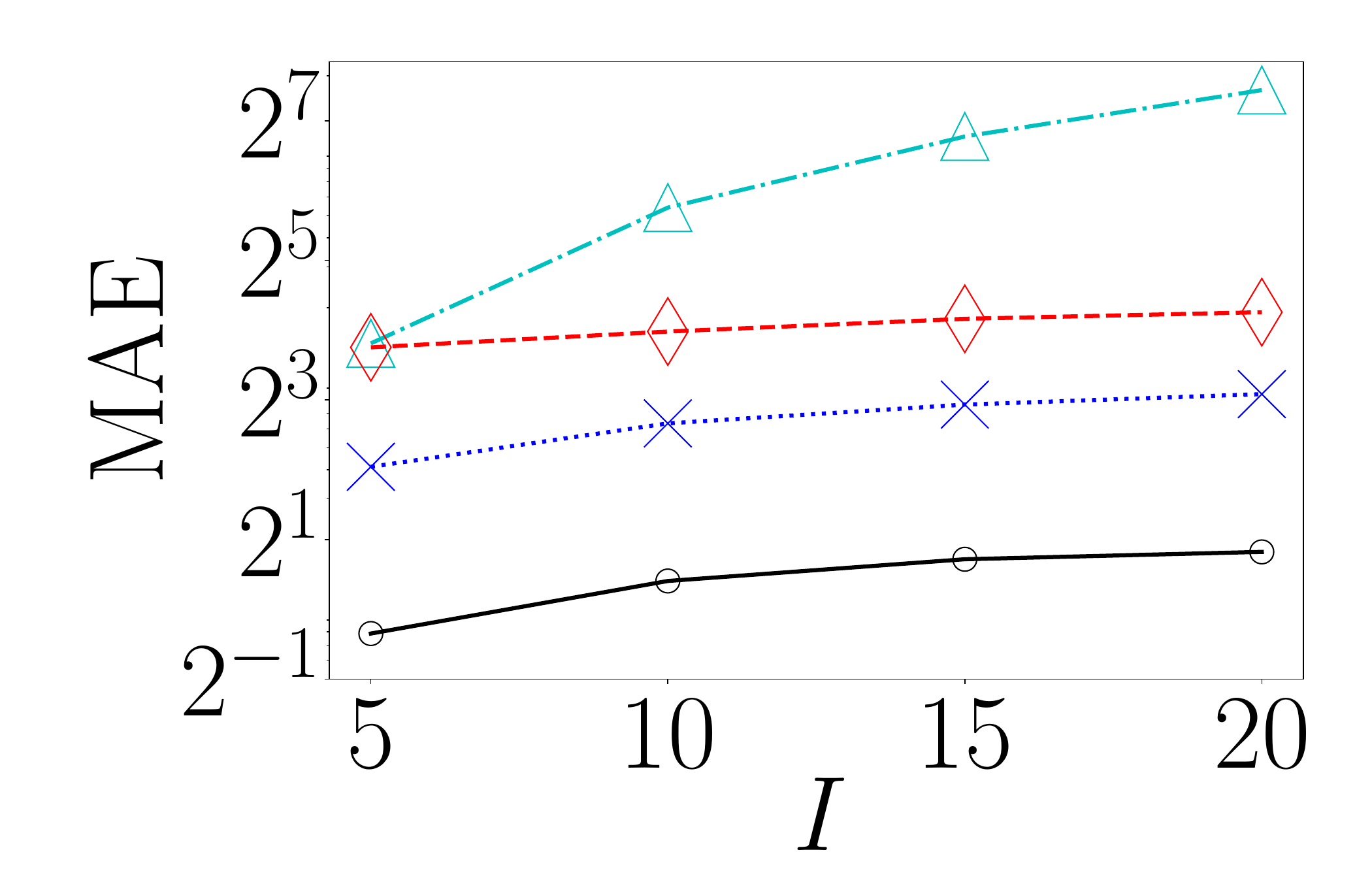}}
  \subfloat[MAE vs. $I$, Brightkite]{%
    \includegraphics[width=.25\linewidth]{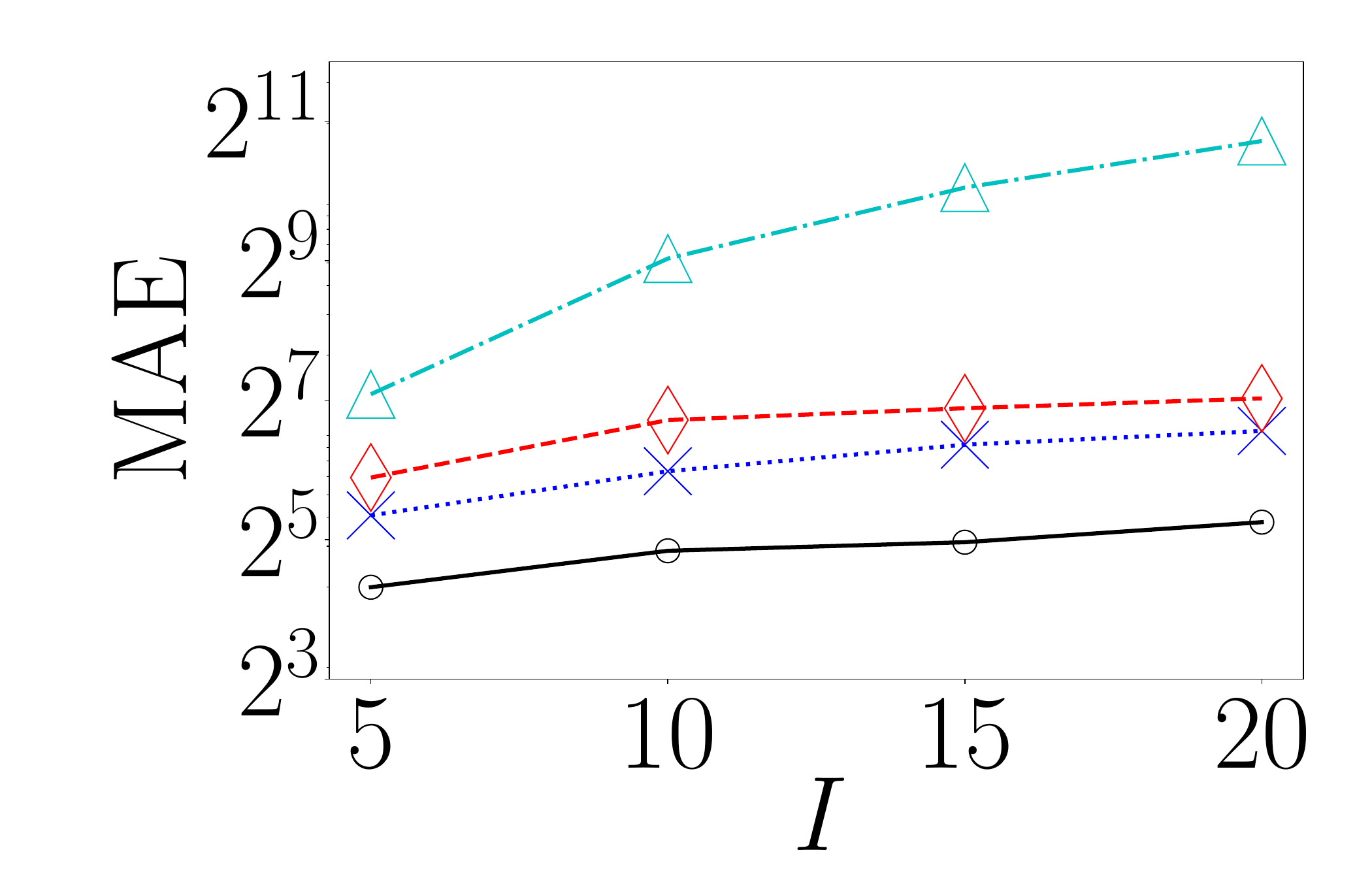}}
  \subfloat[MRE vs. $I$, Yelp]{%
    \includegraphics[width=.25\linewidth]{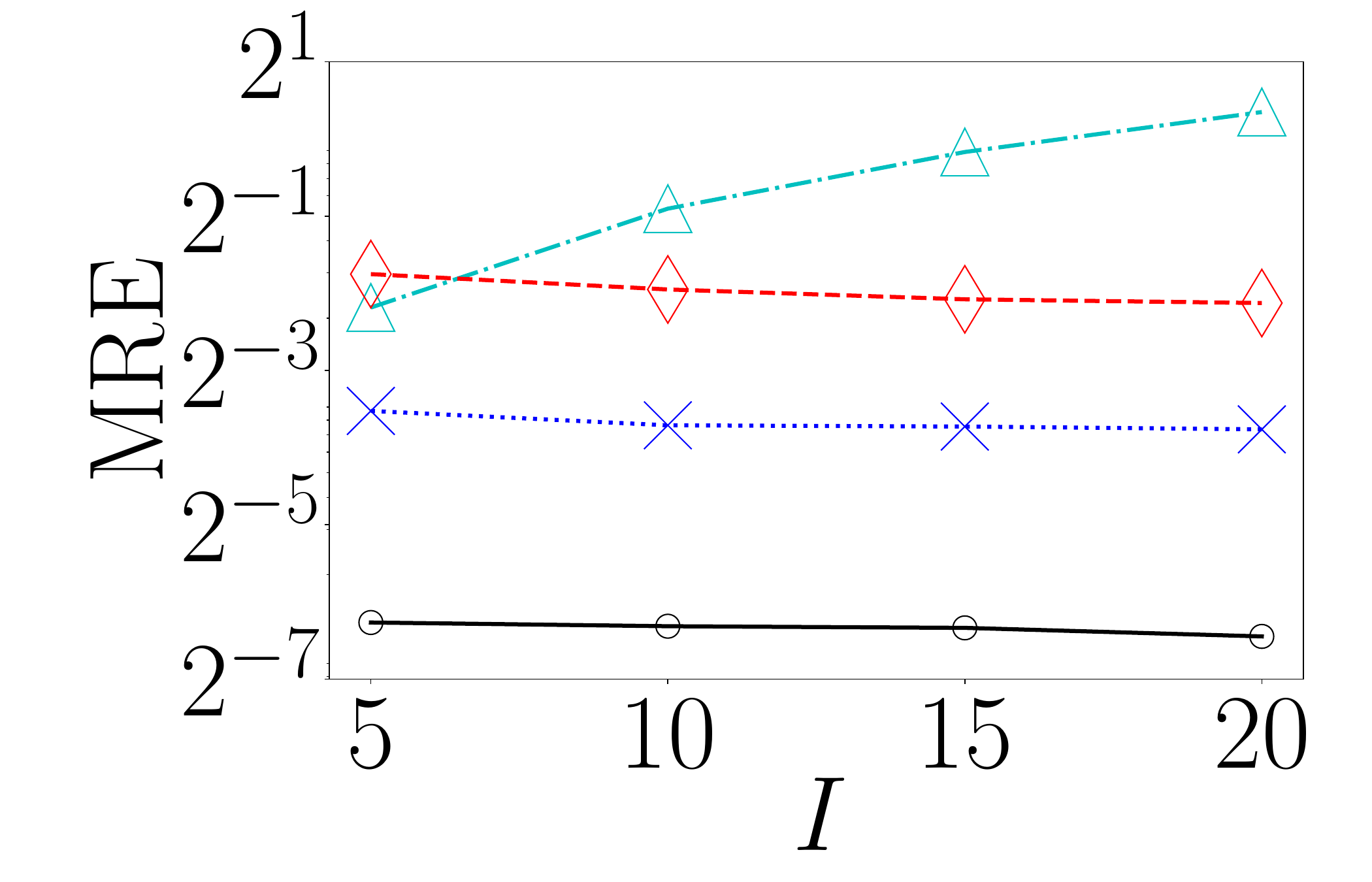}}
  \subfloat[MRE vs. $I$, Brightkite]{%
    \includegraphics[width=.25\linewidth]{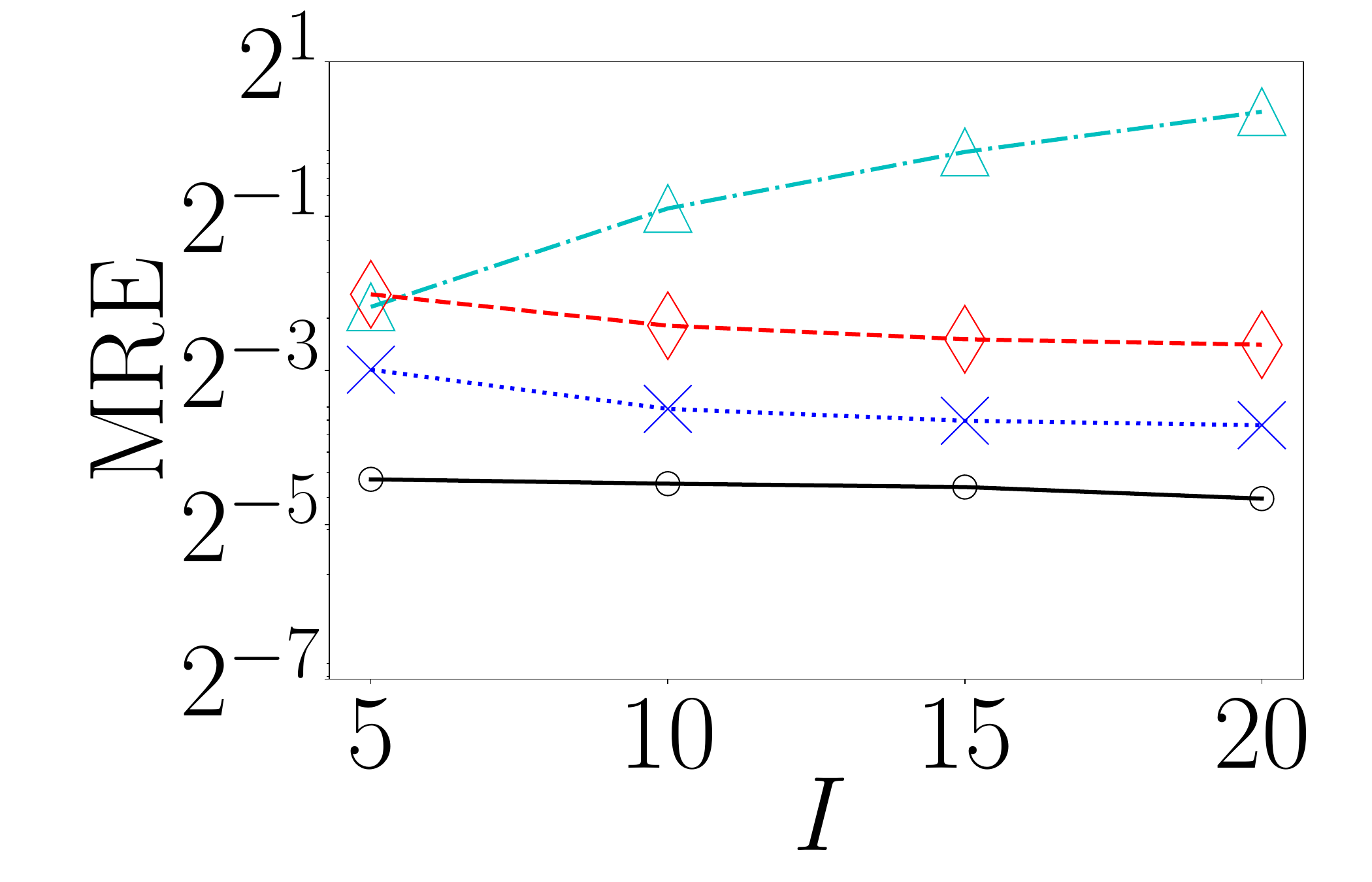}}\\
    \vspace{-0.15cm}
  \subfloat[MAE vs. $P$, Yelp]{%
    \includegraphics[width=.25\linewidth]{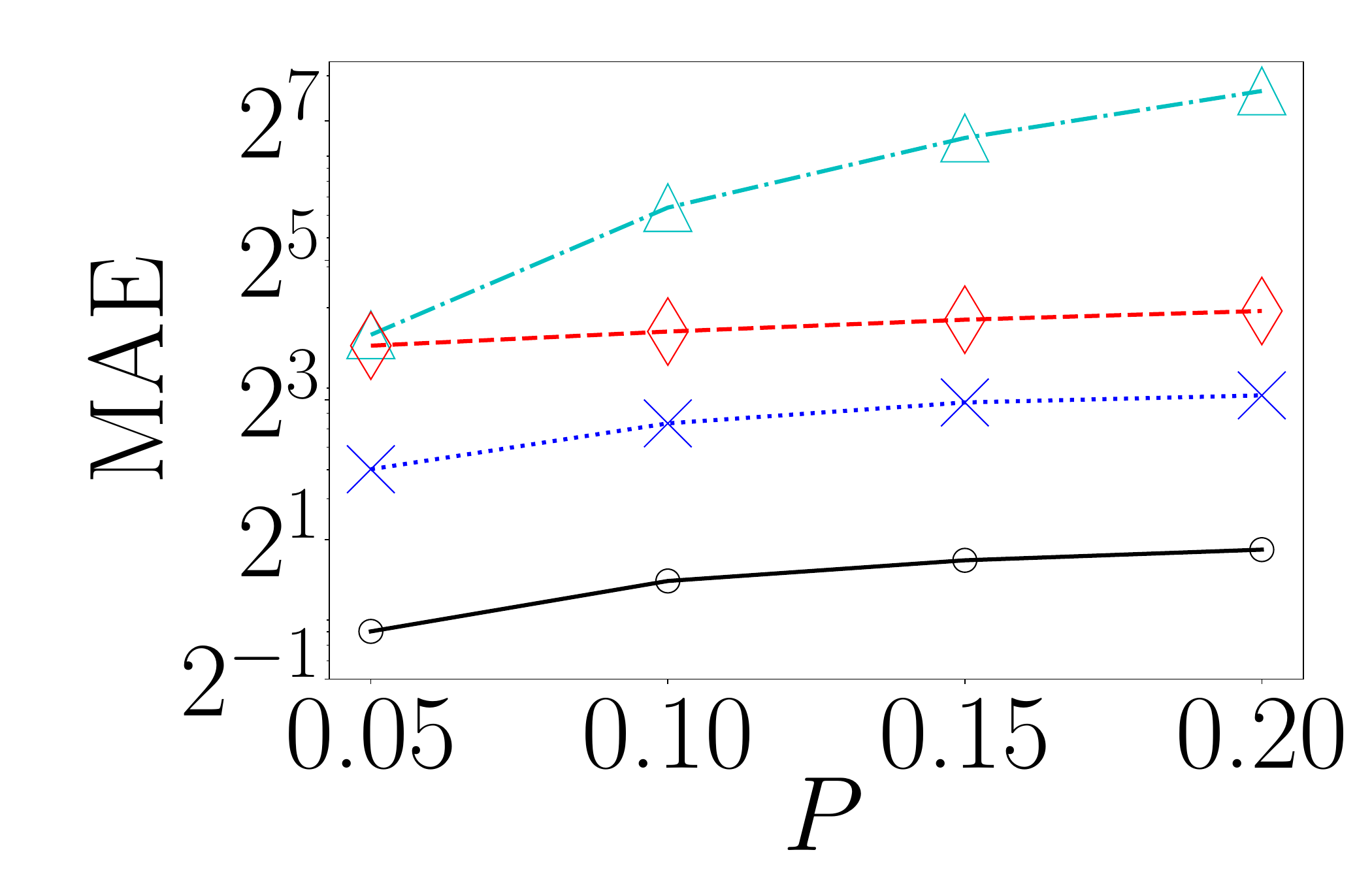}}
  \subfloat[MAE vs. $P$, Brightkite]{%
    \includegraphics[width=.25\linewidth]{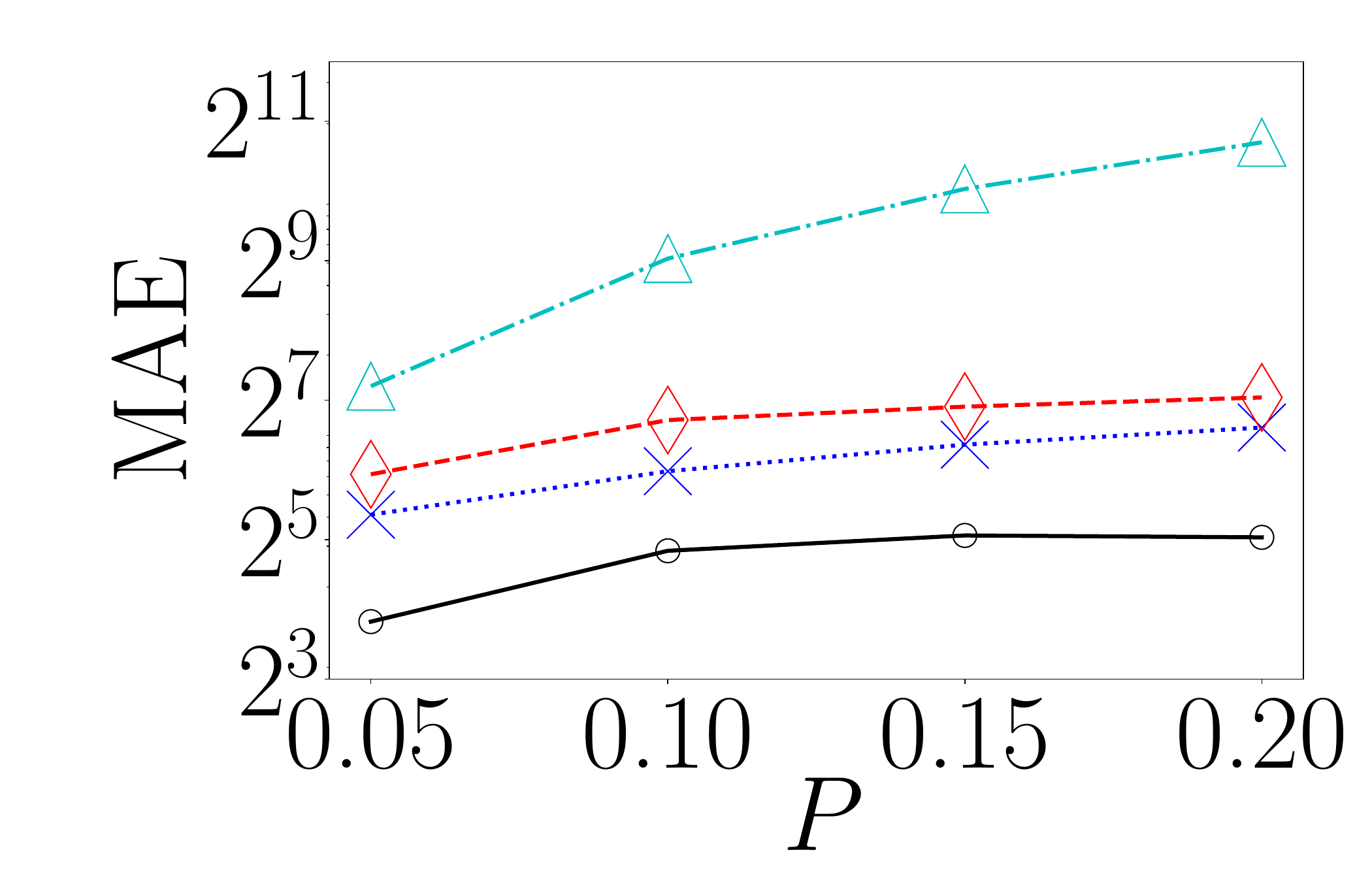}}
  \subfloat[MRE vs. $P$, Yelp]{%
    \includegraphics[width=.25\linewidth]{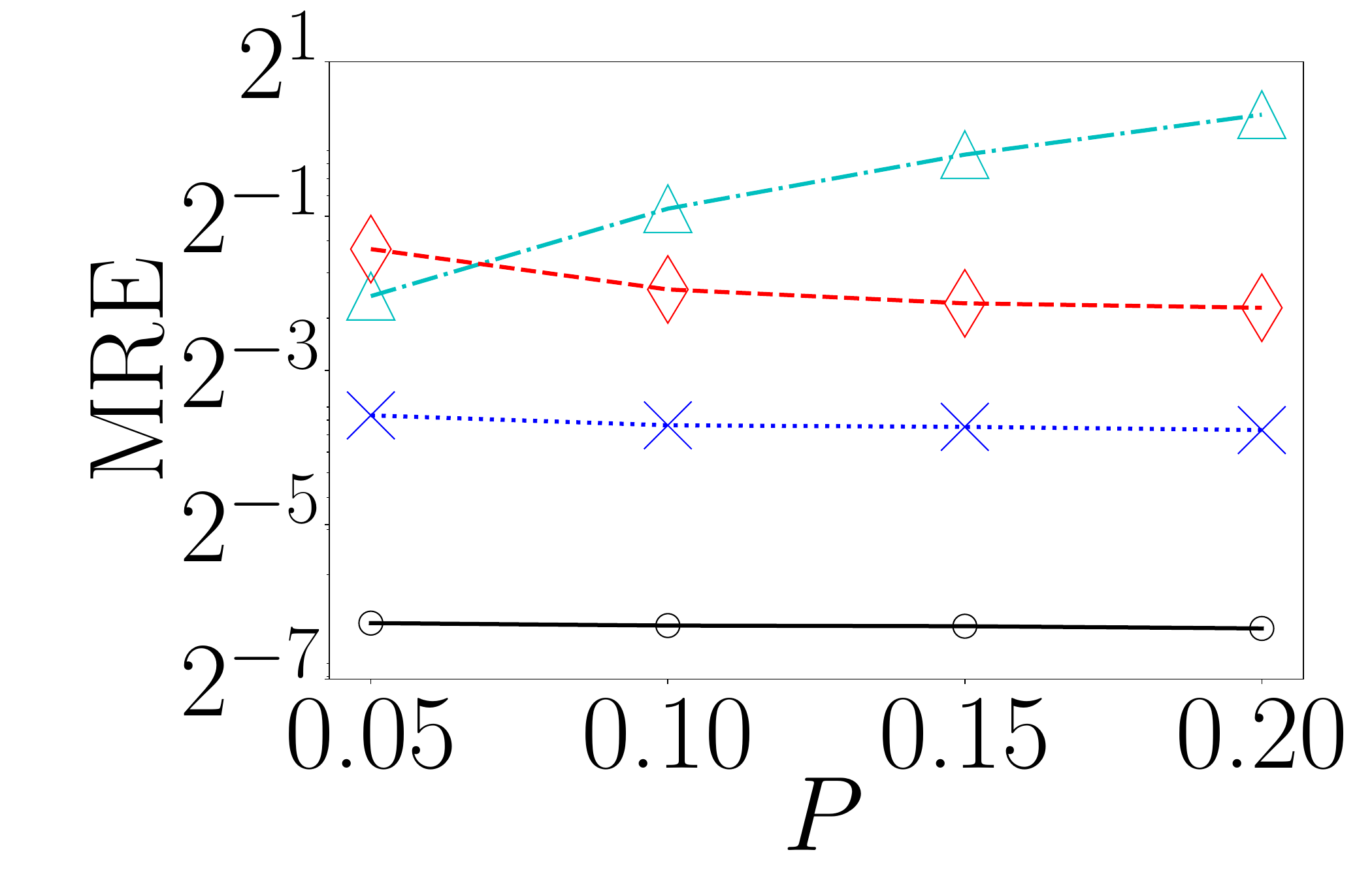}}
  \subfloat[MRE vs. $P$, Brightkite]{%
    \includegraphics[width=.25\linewidth]{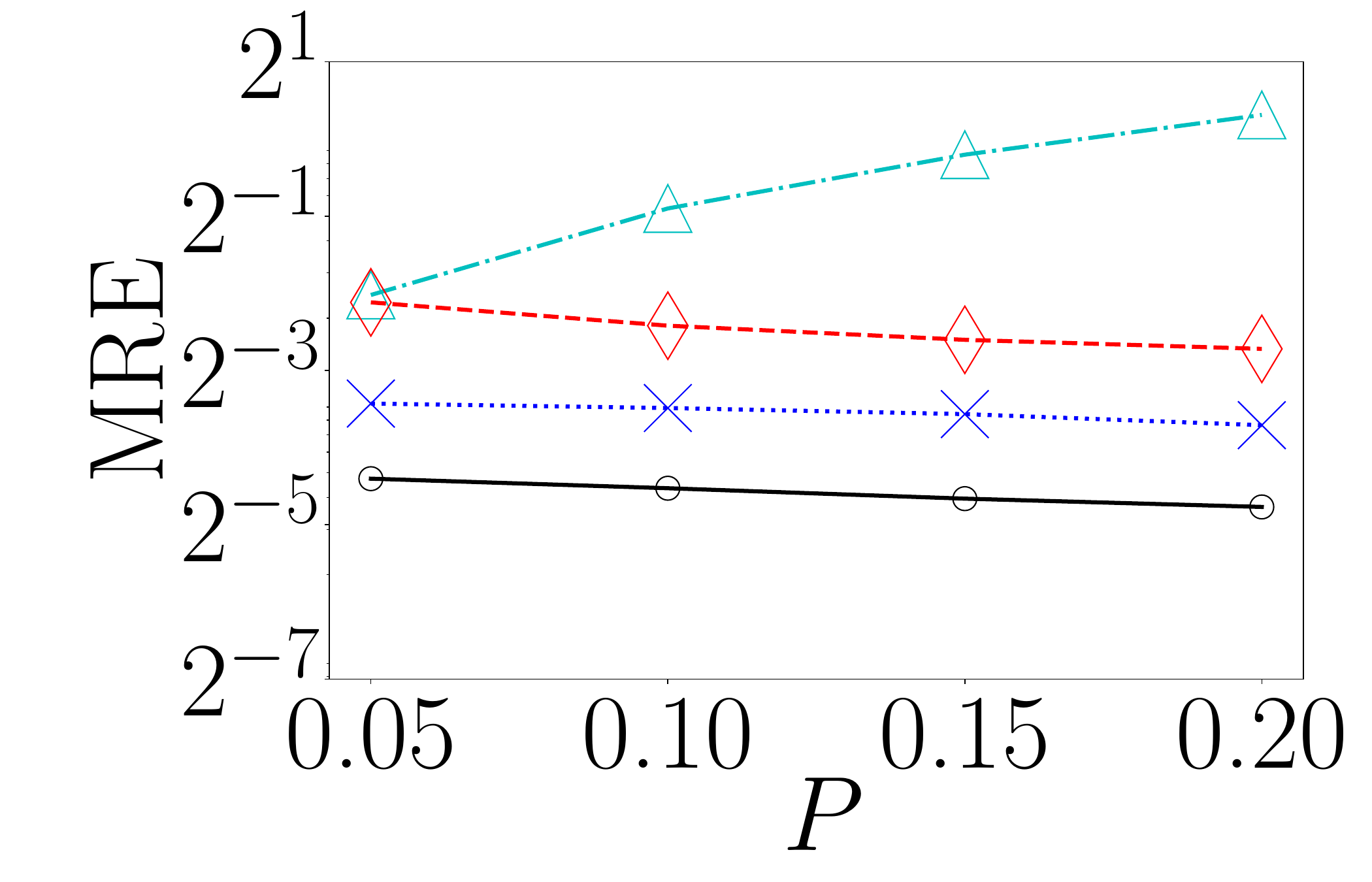}}\\
    \vspace{-0.15cm}
  \caption{Accuracy comparison with varying numbers of DHs ($I$) and proportions of the global dataset held by each DH ($P$).}
  \label{fig:result12}
\end{figure*}

\bullethdr{Accuracy Metrics.}
We use the Mean Absolute Error (MAE) and Mean Relative Error (MRE) to assess accuracy. Let $RC$ denote the ground-truth count and $\hat{RC}_j$ the estimated answer in the $j$-th run.
Then, the MAE and MRE can be calculated as: $\mathrm{MAE}=\frac{1}{J} \sum_{j=1}^J |\hat{RC}_j-RC|$ and $\mathrm{MRE}=\frac{1}{J} \sum_{j=1}^J \frac{|\hat{RC}_j-RC|}{RC}$.

\bullethdr{Efficiency Metrics.} We measure Time and Communication overhead. Time includes the end-to-end latency from generating the query range to obtaining the final plaintext range count.
Communication is the total data exchanged among the QU, CA, and DHs.

\vspace{0.1cm}
\noindent \textbf{Baselines}. We compare PPRC with two categories of representative baselines under comparable threat models.

\bullethdr{State-of-the-art protocols.}
These protocols are selected to represent the two major classes of prior privacy-preserving range counting solutions: encryption-based and MPC-based methods.
We include RCC~\cite{akhavan2023level}, which employs homomorphic encryption for bilateral privacy and securely aggregates partial counts from multiple DHs, and TVA~\cite{faisal2023tva}, an MPC-based protocol supporting exact range counting over overlapped datasets.
Since TVA produces exact results, its estimation error is omitted.

\bullethdr{Sketch-based variants.}
To evaluate different sketch estimators and aggregation schemes, we implement FM-Count and LL-Count by replacing the OLC component in PPRC with secure variants of FM~\cite{flajolet1985probabilistiC} and LogLog~\cite{durand2003loglog} sketches, respectively.
We also implement LC-Count, which adopts the same Linear Counting estimator~\cite{whang1990linear} as PPRC but uses secure bitwise-OR aggregation.
As LC-Count and PPRC employ identical estimators, they achieve the same accuracy, and we therefore omit the accuracy results of LC-Count.
The counter length of these sketches $S$ is selected from $\{1\mathrm{K},2\mathrm{K},3\mathrm{K},4\mathrm{K}\}$.

\vspace{0.1cm}
\noindent \textbf{Implementation.} 
We implement our PPRC using C++. 
The hash functions we used are MurmurHash~\cite{appleby2008murmurhash} with various random seeds.
By default, all experiments adopt the SHE scheme \cite{zhang2022efficient} as the homomorphic encryption primitive.
For SHE, we follow the parameter settings in~\cite{zhang2022efficient} and set $k_0=4096$, $k_1=80$, and $k_2=160$, where $k_0$ denotes the modulus size, $k_1$ is the plaintext length, and $k_2$ is the randomness parameter used during encryption.
The resulting 8192-bit public modulus ($2k_0$) provides at least 128-bit security against factorization attacks, while the 160-bit randomness ($k_2$) guarantees semantic security.
PPRC has bounded multiplicative depth and therefore does not require bootstrapping.
For baseline methods whose circuit depth exceeds this limit, ciphertext refreshing follows the bootstrapping protocol in~\cite{zheng2021efficient}.

Unless otherwise specified, Bloom filters use $K=7$ hash functions, satisfying the SHE noise budget.
The default false positive rate is set to $f_p=10^{-4}$ to balance accuracy and communication costs. The Bloom filter size $M$ is computed using the standard formula in \cref{subsec:bloom_filter}.
All experiments are conducted on a local area network (LAN) with 2 Gbit/s bandwidth and an average latency of 0.076 ms.
Each entity runs on a server with an Intel Xeon E5-2690 CPU (8 cores, 2.6GHz).
All results are averaged over 100 runs.

\begin{figure*}[ht]
  \centering
  \setlength{\abovecaptionskip}{0.3cm}
  \includegraphics[width=1.0\linewidth]{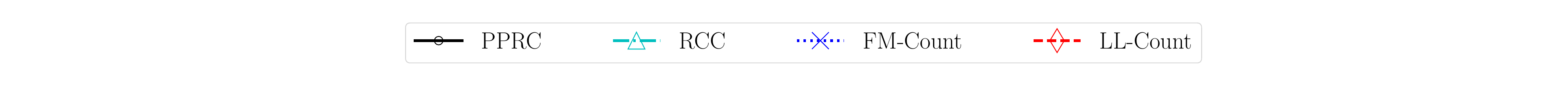}\\
  \vspace{-0.53cm}
    \subfloat[MAE vs. $S$, Yelp]{%
    \includegraphics[width=.25\linewidth]{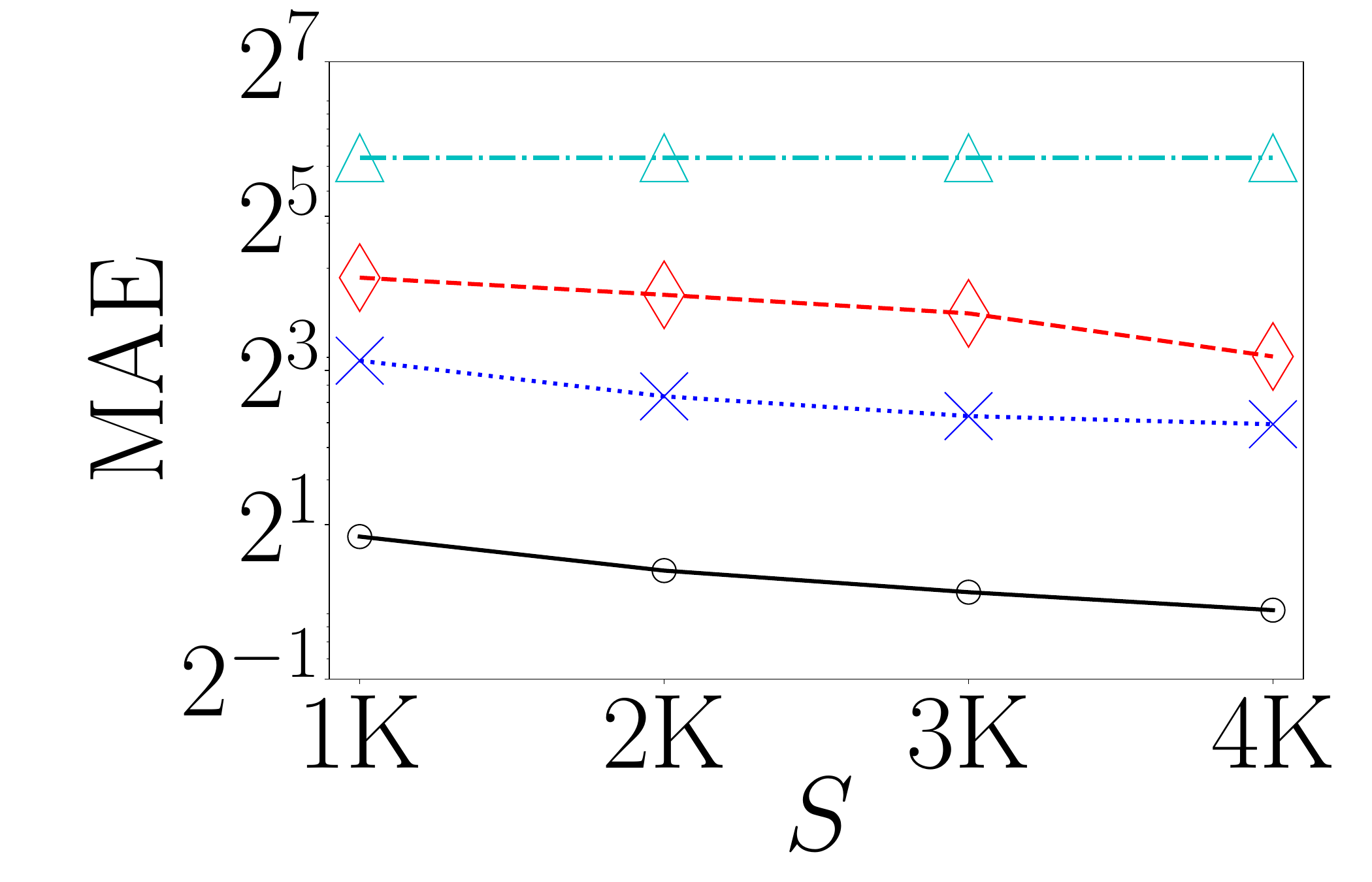}}
  \subfloat[MAE vs. $S$, Brightkite]{%
    \includegraphics[width=.25\linewidth]{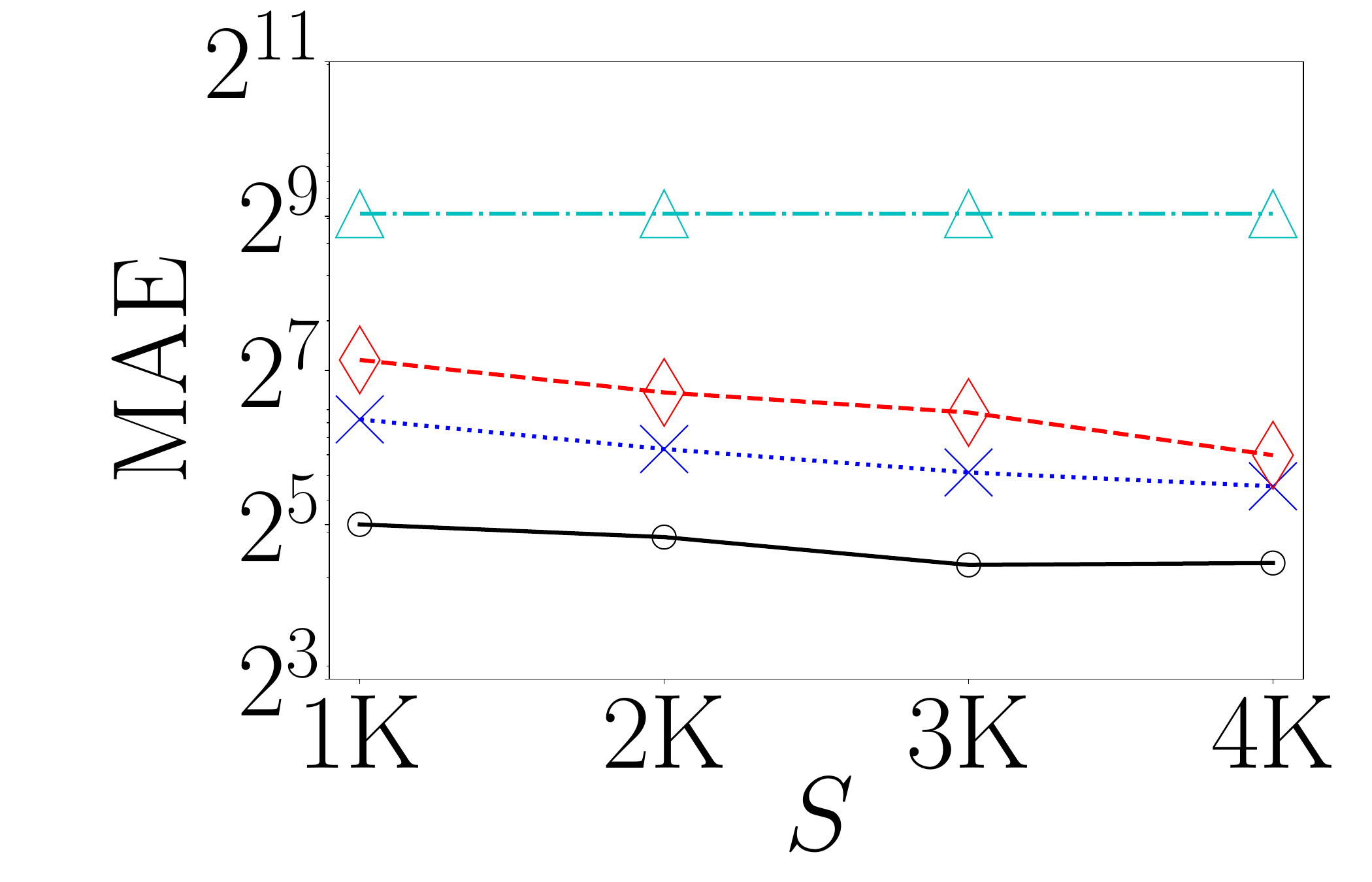}}
  \subfloat[MRE vs. $S$, Yelp]{%
    \includegraphics[width=.25\linewidth]{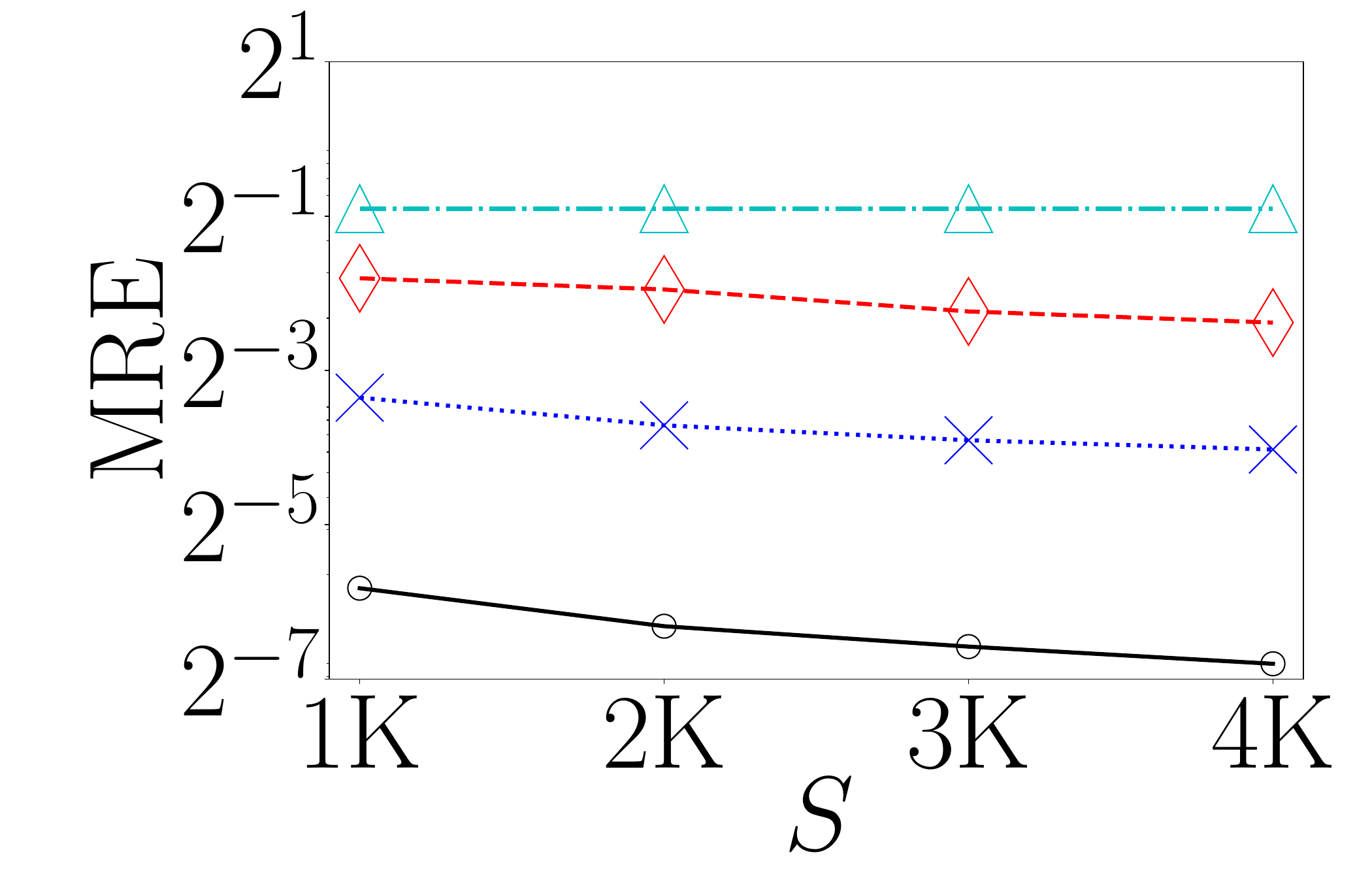}}
  \subfloat[MRE vs. $S$, Brightkite]{%
    \includegraphics[width=.25\linewidth]{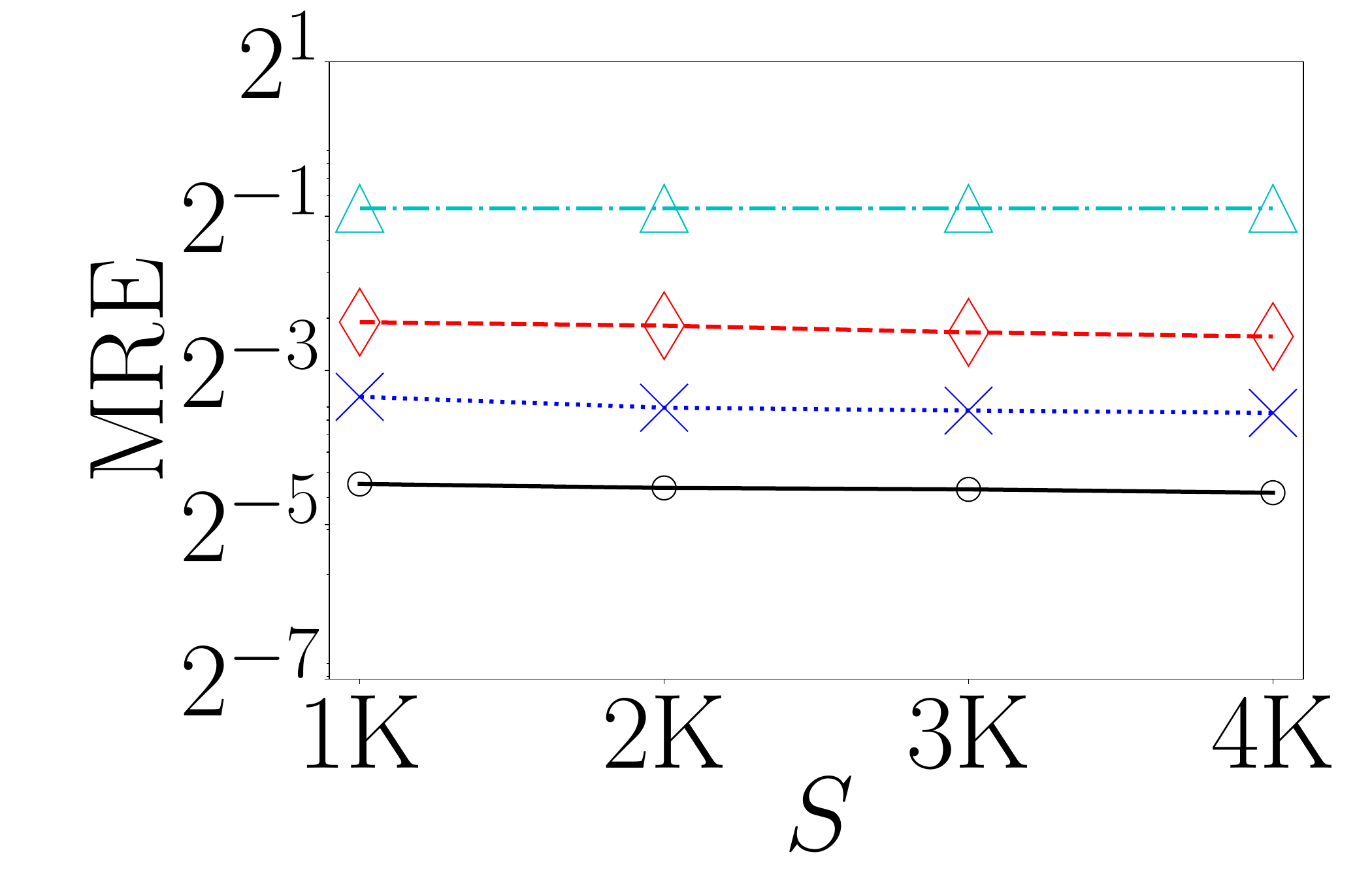}}
    \vspace{-0.15cm}
  \caption{Accuracy comparison with varying sketch counter lengths ($S$).}
  \label{fig:result13}
\end{figure*}

%% file: exp_results.tex
\subsection{Accuracy Comparison}

\begin{figure*}[ht]
  \centering
  \setlength{\abovecaptionskip}{0.3cm}
  \includegraphics[width=1.0\linewidth]{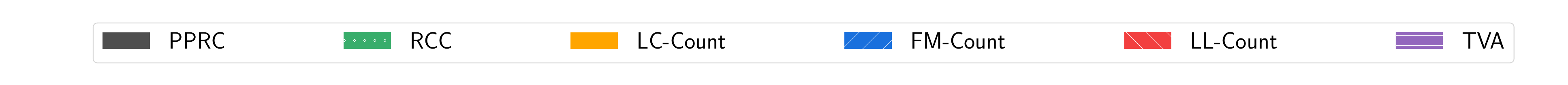}\\
  \vspace{-0.53cm}
  \subfloat[Time vs. $I$, $P=0.1$, Yelp]{%
    \includegraphics[width=.25\linewidth]{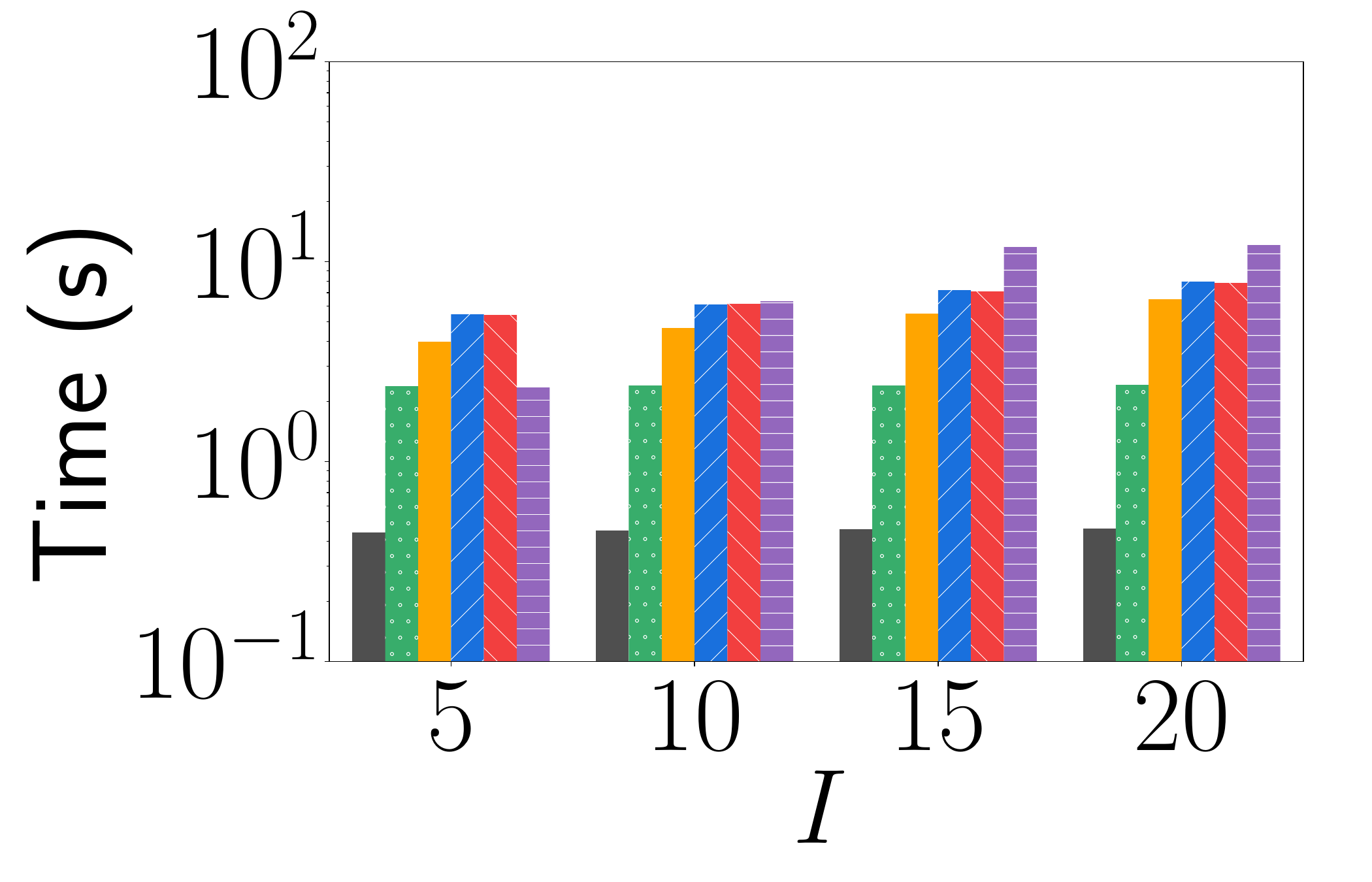}}
  \subfloat[Time vs. $I$, $P=0.1$, Brightkite]{%
    \includegraphics[width=.25\linewidth]{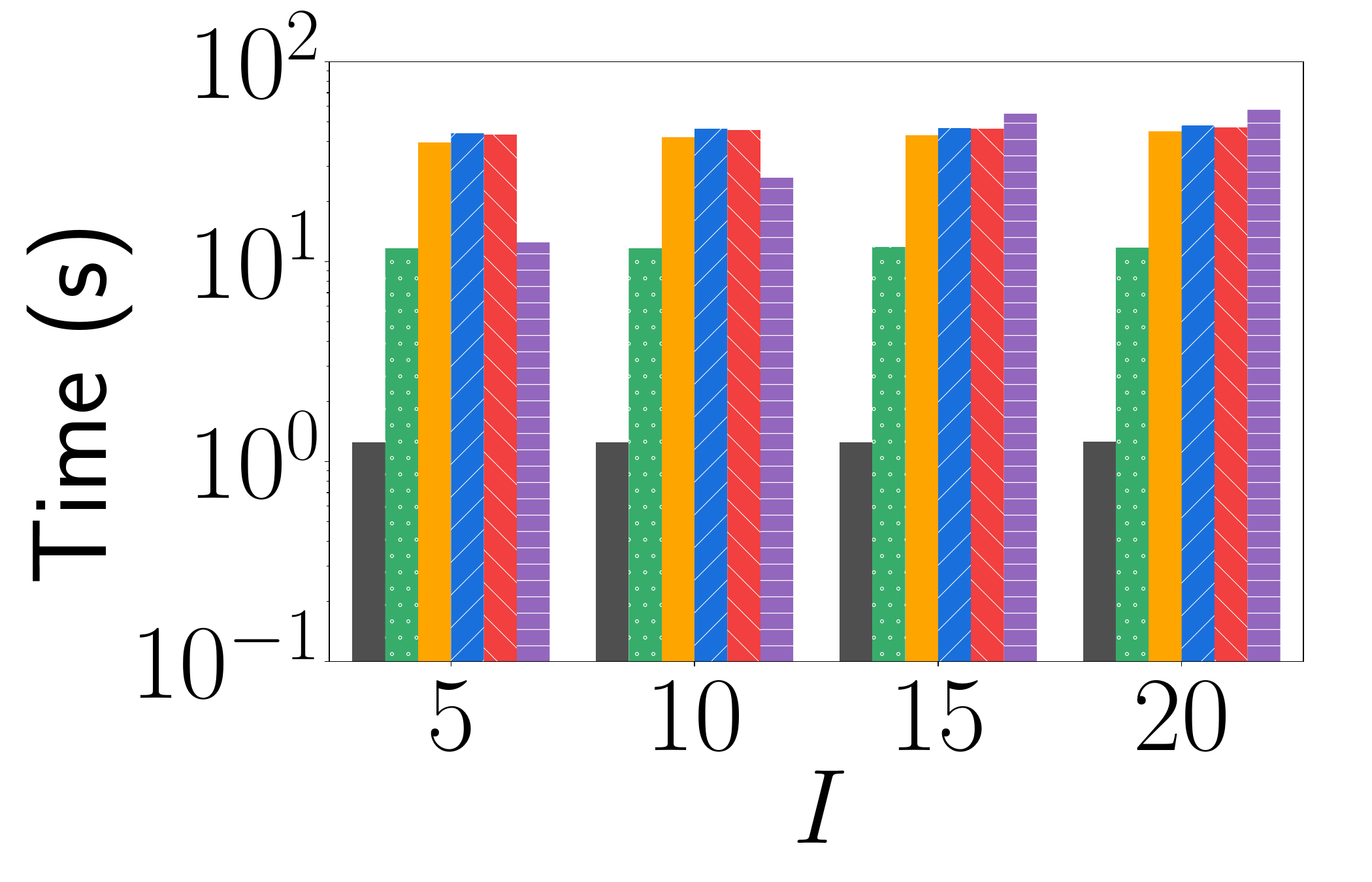}}
  \subfloat[Time vs. $P$, $I=10$, Yelp]{%
    \includegraphics[width=.25\linewidth]{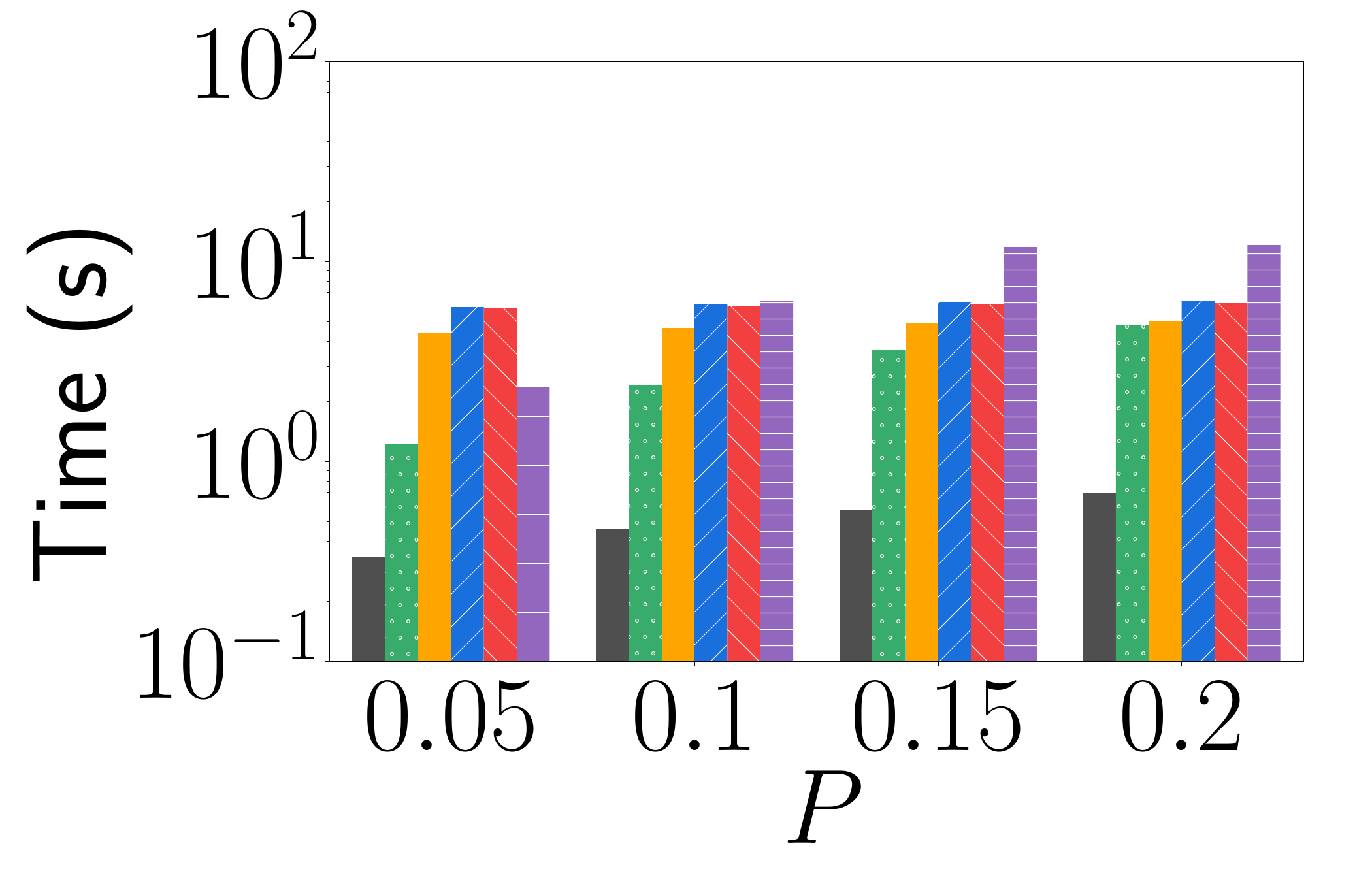}}
  \subfloat[Time vs. $P$, $I=10$, Brightkite]{%
    \includegraphics[width=.25\linewidth]{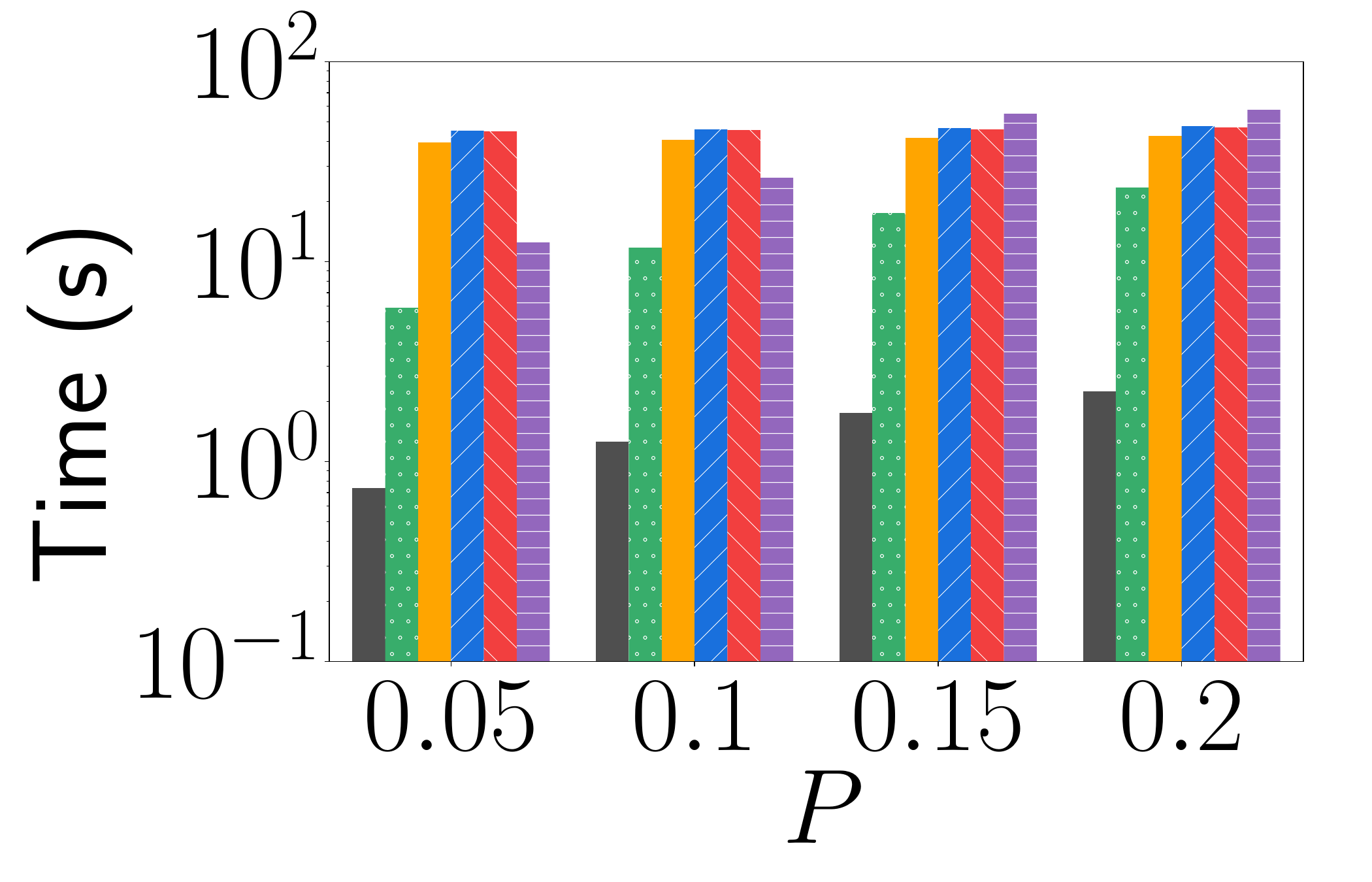}}
    \vspace{-0.15cm}
  \caption{Time cost comparison with various $I$ and $P$.}
  \label{fig:result21}
\end{figure*}

In this section, we compare the accuracy of PPRC with RCC, FM-Count, and LL-Count.
The results consistently show that PPRC is several times more accurate than the other protocols.

\noindent \textbf{\textit{The Effect of $I$.}}
\cref{fig:result12}(a)-(d) report MAE and MRE as the number of DHs varies over $I \in \{5,10,15,20\}$.
Here we set $P=0.1$ and $S=2 \mathrm{K}$.
As $I$ increases, MAE increases while MRE decreases for PPRC, FM-Count, and LL-count; for RCC, both the MAE and MRE increase.
Furthermore, PPRC is several times more accurate than RCC, FM-Count, and LL-Count.
For example, in \cref{fig:result12}(a) (Yelp dataset), PPRC reduces the MAE by an average of 55.79$\times$, 4.90$\times$, and 12.69$\times$ compared with RCC, FM-Count, and LL-Count, respectively. \cref{fig:result12}(b)-(d) corroborate these findings.

\noindent \textbf{\textit{The Effect of $P$.}}
\cref{fig:result12}(e)-(h) show MAE and MRE as $P$ varies over $\{0.05,0.1,0.15,0.2\}$, where $P$ denotes the proportion of each DH's private dataset relative to the global real-world dataset.
We fix $I=10$ and $S=2 \mathrm{K}$.
As $P$ increases, PPRC, FM-Count, and LL-Count exhibit increasing MAE but decreasing MRE, whereas RCC shows increases in both metrics.
Besides, our PPRC remains several times more accurate than RCC, FM-Count, and LL-Count.
\cref{fig:result12}(e) presents the MAE results for the Yelp dataset. 
On average, PPRC is 55.30, 4.79, and 12.65 times more accurate than RCC, FM-Count, and LL-Count.
The other results show consistent results.

\begin{figure*}[ht]
  \centering
  \setlength{\abovecaptionskip}{0.3cm}
  \includegraphics[width=1.0\linewidth]{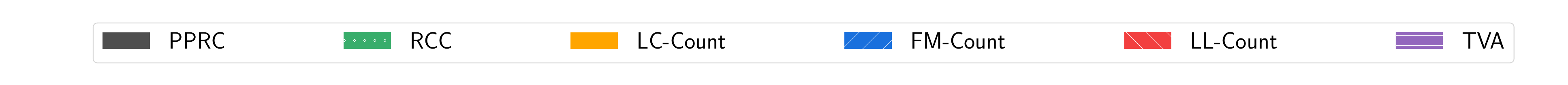}\\
  \vspace{-0.53cm}
  \subfloat[Comm vs. $I$, $P=0.1$, Yelp]{%
    \includegraphics[width=.25\linewidth]{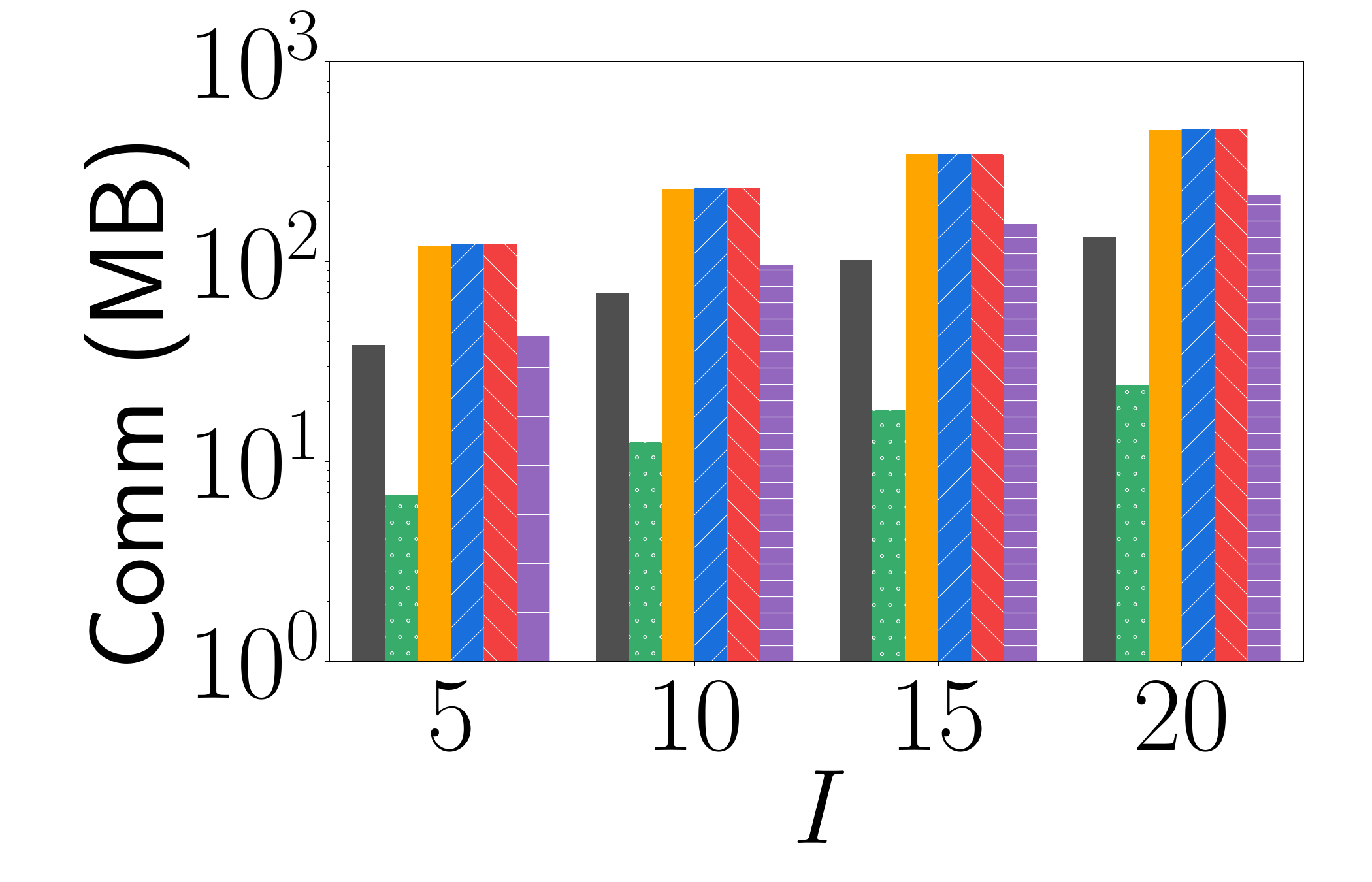}}
  \subfloat[Comm vs. $I$, $P=0.1$, Brightkite]{%
    \includegraphics[width=.25\linewidth]{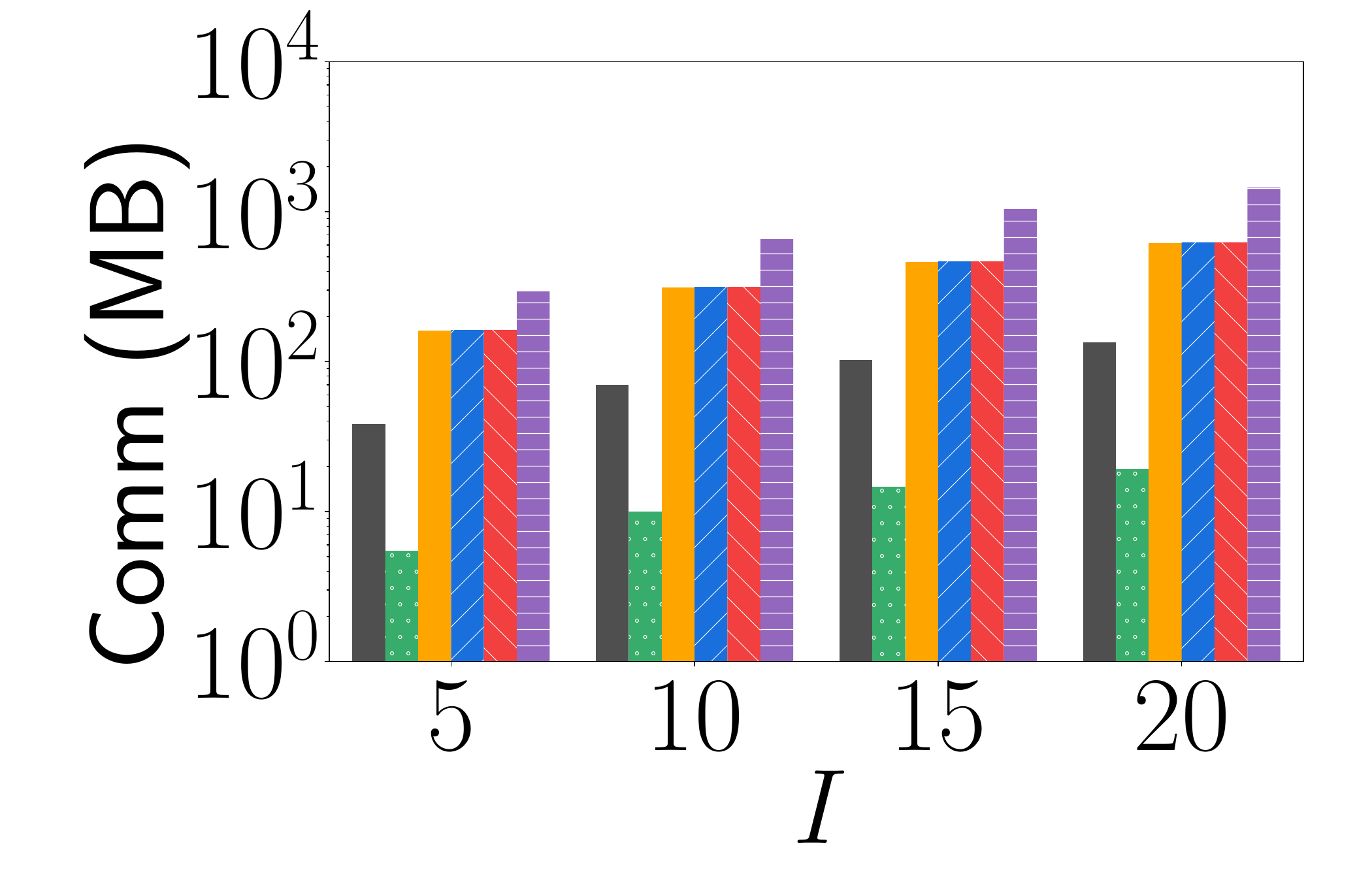}}
  \subfloat[Comm vs. $P$, $I=10$, Yelp]{%
    \includegraphics[width=.25\linewidth]{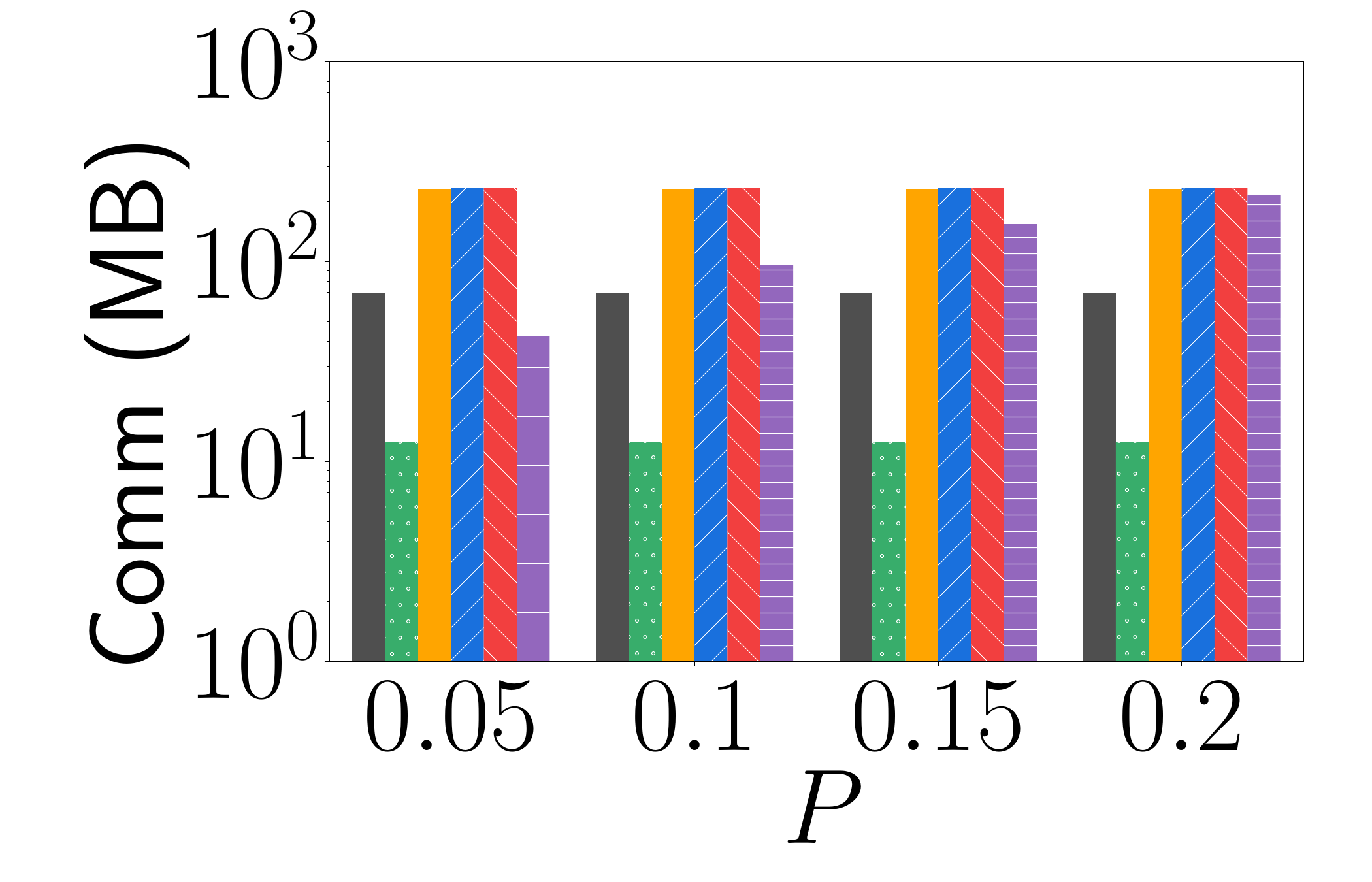}}
  \subfloat[Comm vs. $P$, $I=10$, Brightkite]{%
    \includegraphics[width=.25\linewidth]{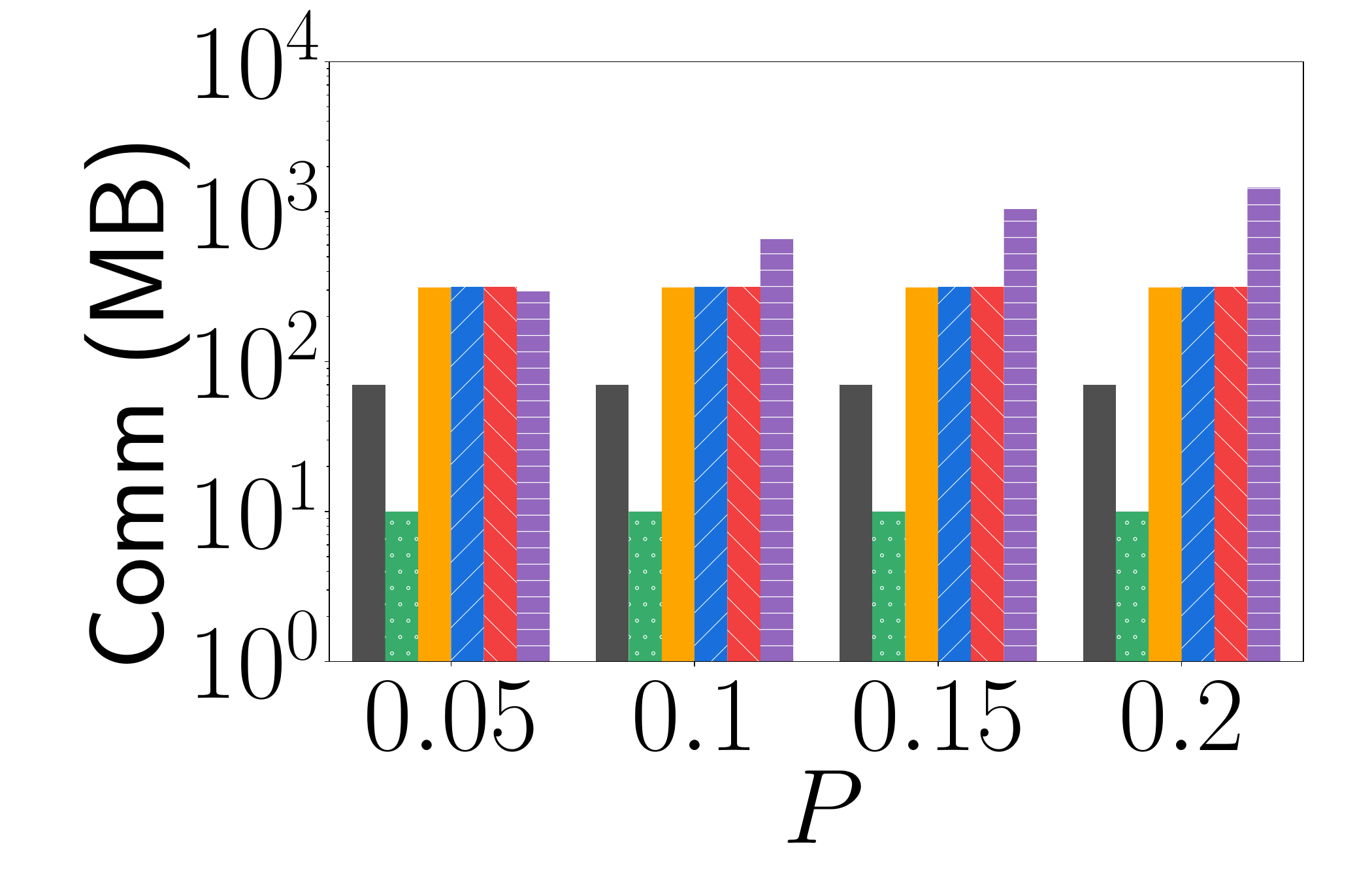}}
    \vspace{-0.15cm}
  \caption{Communication cost comparison with various $I$ and $P$.}
  \label{fig:result22}
\end{figure*}

\noindent \textbf{\textit{The Effect of $S$.}}
\cref{fig:result13}(a)-(d) evaluate the effect of the sketch counter length $S \in \{1 \mathrm{K}, 2 \mathrm{K}, 3 \mathrm{K}, 4 \mathrm{K}\}$ under $P=0.1$ and $I=10$.
As expected, increasing $S$ reduces errors for all sketch-based protocols.
The results for RCC remain constant, as RCC does not use sketches.
PPRC is several times more accurate than RCC, FM-Count, and LL-Count.
As shown in \cref{fig:result13}(a), our PPRC is on average 44.61, 4.95, and 11.03 times more accurate than RCC, FM-Count, and LL-Count.
Similar trends are observed in \cref{fig:result13}(b)–(d).

\subsection{Efficiency Comparison}
This section evaluates the efficiency of PPRC, RCC, LC-Count, FM-Count, LL-Count, and TVA. 
The results show that PPRC is dozens of times more efficient than the other protocols.

\noindent \textbf{\textit{Time Cost.}}
\cref{fig:result21}(a)-(d) report the Time cost with respect to the number of DHs $I \in \{5, 10, 15, 20\}$ and the data proportion $P \in \{0.05,0.1,0.15,0.2\}$.
We set $I=10$, $P=0.1$, and $S=2 \mathrm{K}$ as default.
As $I$ and $P$ increase, the Time cost of all protocols grows.
Besides, PPRC is dozens of times faster than RCC, LC-Count, FM-Count, LL-Count, and TVA.
For example, in \cref{fig:result21}(b) (Brightkite dataset), PPRC is on average 9.37, 33.89, 36.87, 36.36, and 30.16 times faster than RCC, LC-Count, FM-Count, LL-Count, and TVA, respectively.
Consistent trends are shown in \cref{fig:result21}(a), (c), and (d).

\noindent \textbf{\textit{Communication Cost.}}
\cref{fig:result22}(a)-(d) presents the Communication cost concerning $I \in \{5,10,15,20\}$ and $P \in \{0.05,0.1,0.15,0.2\}$.
We also set $I=10$, $P=0.1$, and $S=2 \mathrm{K}$ as default.
As $I$ increases, the communication cost of all protocols increases. As $P$ increases, only TVA incurs higher communication overhead, while the costs of the other protocols remain unchanged due to their use of fixed-size encrypted sketches or fixed-format query messages.
Furthermore, the communication cost of PPRC is several times lower than that of LC-Count, FM-Count, LL-Count, and TVA.
Specifically, \cref{fig:result22}(b) presents the Communication cost for the Brightkite dataset with various $I$.
On average, PPRC reduces the communication cost by 4.75, 4.79, 4.75, and 10.50 times compared with LC-Count, FM-Count, LL-Count, and TVA.
Although RCC incurs lower communication costs, it is less accurate and computationally less efficient than PPRC.
Consistent results are shown in \cref{fig:result22}(a), (c), and (d).

\noindent \textbf{\textit{The Effect of $S$.}}
\cref{fig:result23}(a) evaluates the Time costs across varying sketch counter lengths $S \in \{1 \mathrm{K}, 2 \mathrm{K}, 3 \mathrm{K}, 4 \mathrm{K}\}$, with $I=10$ and $P=0.1$.
Since RCC and TVA do not use count estimation sketches, their costs remain constant regardless of $S$.
While the Time cost of all sketch-based protocols increases with $S$.
Furthermore, PPRC is dozens of times faster than the other protocols.
In \cref{fig:result23}(a), PPRC is on average 9.29, 44.92, 44.79, 44.57, and 20.86 times faster than RCC, LC-Count, FM-Count, LL-Count, and TVA.

\begin{figure}[t]
  \centering
  \setlength{\abovecaptionskip}{0.3cm}
  \includegraphics[width=1.0\linewidth]{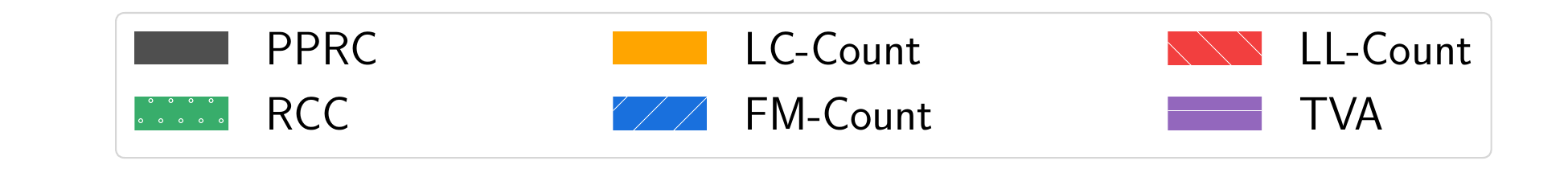}\\
  \vspace{-0.41cm}
  \subfloat[Time vs. $S$, Brightkite]{%
    \includegraphics[width=.5\linewidth]{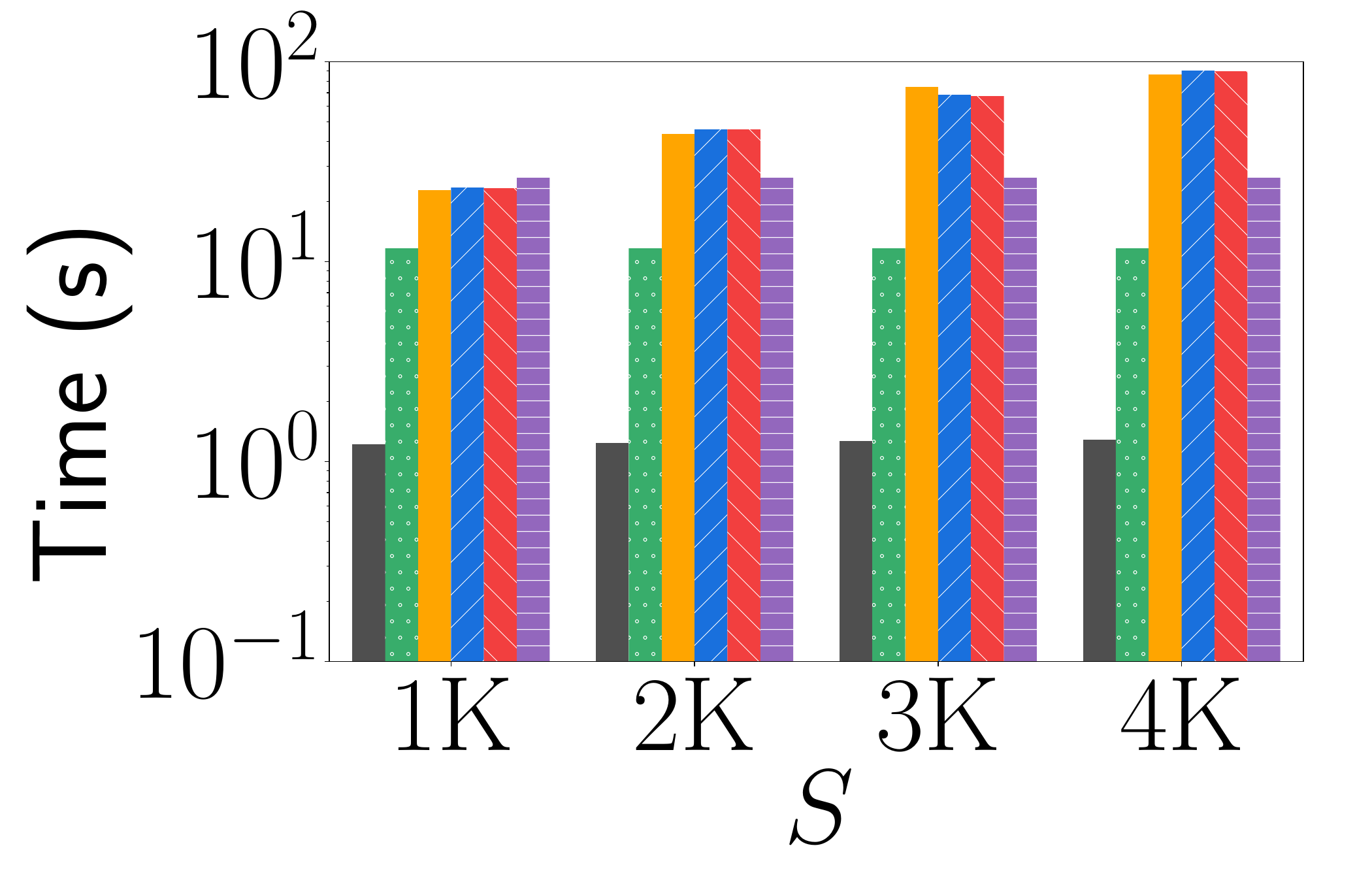}}
  \subfloat[Time vs. $|U|$, Synthetic dataset]{%
    \includegraphics[width=.5\linewidth]{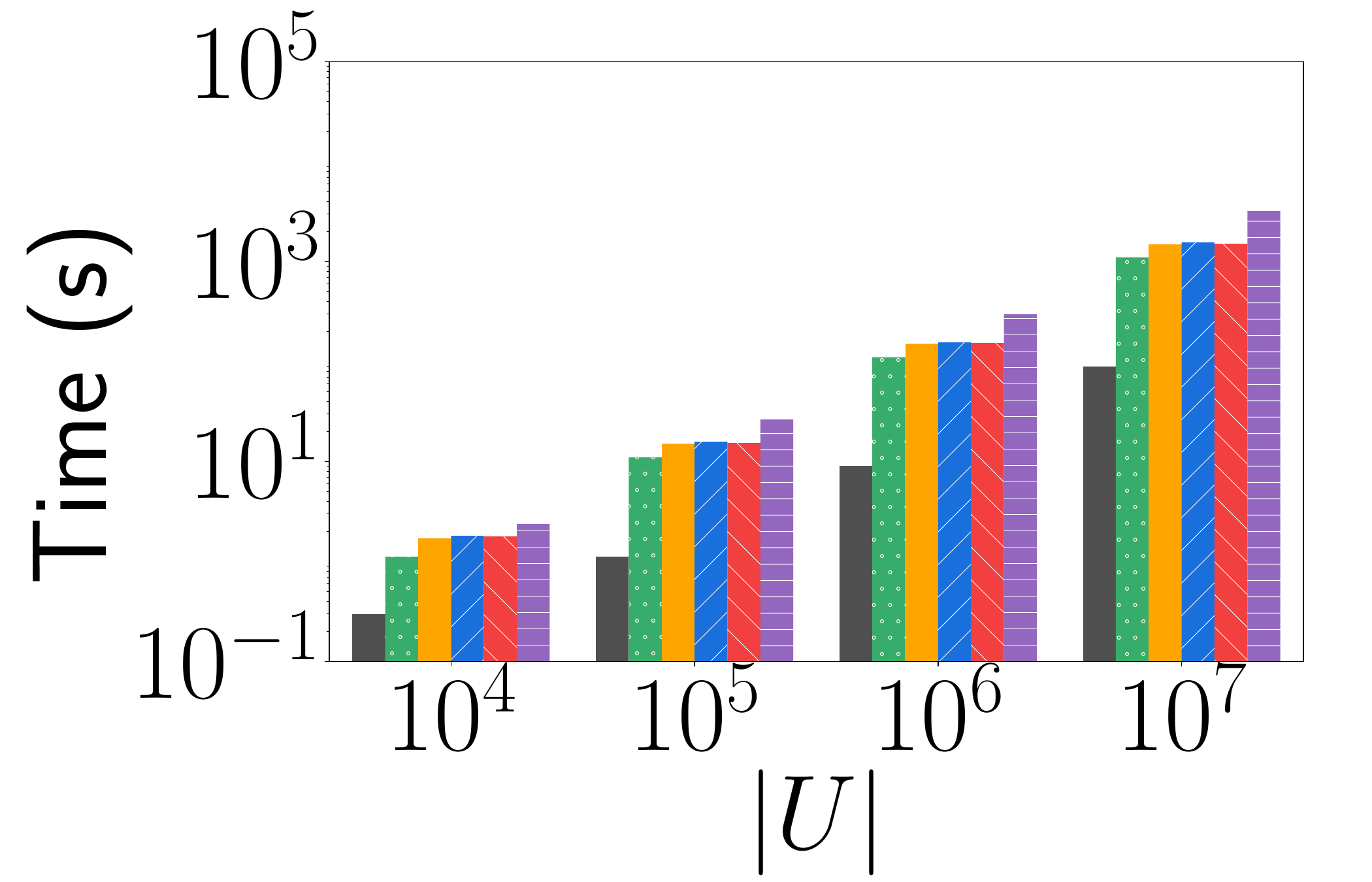}}
    \vspace{-0.15cm}
  \caption{Time cost comparison with various $S$ and $|U|$.}
  \label{fig:result23}
\end{figure}

\noindent \textbf{\textit{Scalability Evaluation.}}
We evaluate the scalability of our PPRC on synthetic datasets with varying total numbers of data records $|U| \in \{10^4, 10^5, 10^6, 10^7\} $. Here, we fix $I=10$, $P=0.1$, and $S=2 \mathrm{K}$. 
Since the communication cost consists of fixed-size sketches and remains constant regardless of $|U|$, we report only the Time cost.

\cref{fig:result23}(b) shows that the Time cost of PPRC increases linearly with $|U|$, aligning with theoretical analysis.
In addition, the Time cost of PPRC is dozens of times lower than that of other protocols.
For $ |U| = 10^7$, the Time cost of PPRC is only 88.733\,s.
On average, PPRC is 9.47, 13.21, 13.65, 13.33, and 25.07 times faster than RCC, LC-Count, FM-Count, LL-Count, and TVA, respectively.

\noindent \textbf{\textit{Performance under BFV Encryption.}}
To demonstrate protocol generality, we replace the underlying SHE scheme with the widely used BFV scheme~\cite{2012somewhat} provided by the Microsoft SEAL library~\cite{chen2017}.
This scheme uses an 8192-degree polynomial modulus and a 218-bit ciphertext modulus to ensure 128-bit security.
For fair comparison, RCC adopts the same BFV configuration.
The number of hash functions is set to $K=3$ to satisfy the noise budget constraint. 
We conduct experiments on the Brightkite dataset with $I=20$, $S=1 \mathrm{K}$, and $P \in \{0.05, 0.1, 0.15, 0.2\}$, as shown in \cref{fig:seal_bfv_performance}.

As shown in \cref{fig:seal_bfv_performance}(a), PPRC achieves the lowest runtime across all $P$ (7--9 seconds).
This efficiency is mainly due to the batching capability of BFV, enabling SIMD-style evaluation over thousands of packed records within a single ciphertext.
Besides, PPRC is up to $12.4\times$ faster than RCC and $17.7\times$ faster than TVA.
\cref{fig:seal_bfv_performance}(b) reports the communication overhead.
TVA incurs the highest communication cost.
Although RCC requires lower communication than PPRC, PPRC maintains bounded communication cost below 500\,MB while achieving considerably higher accuracy and better computational efficiency.
These results confirm that the efficiency advantage of PPRC does not rely on a specific FHE instantiation.

\begin{figure}[t]
  \centering
  \setlength{\abovecaptionskip}{0.3cm}
  \includegraphics[width=1.0\linewidth]{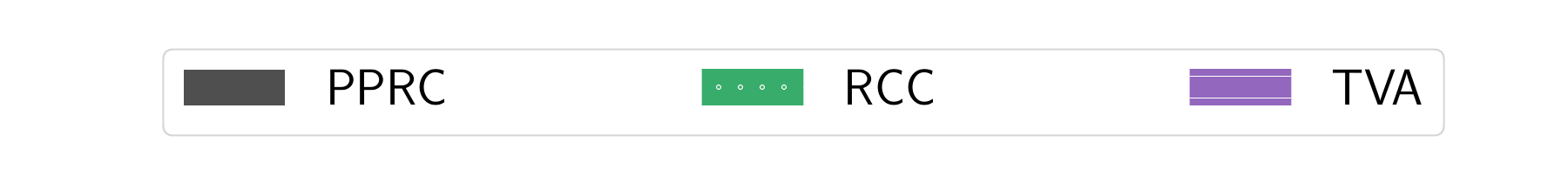}\\
  \vspace{-0.41cm}
  \subfloat[Time vs. $P$]{%
    \includegraphics[width=.5\linewidth]{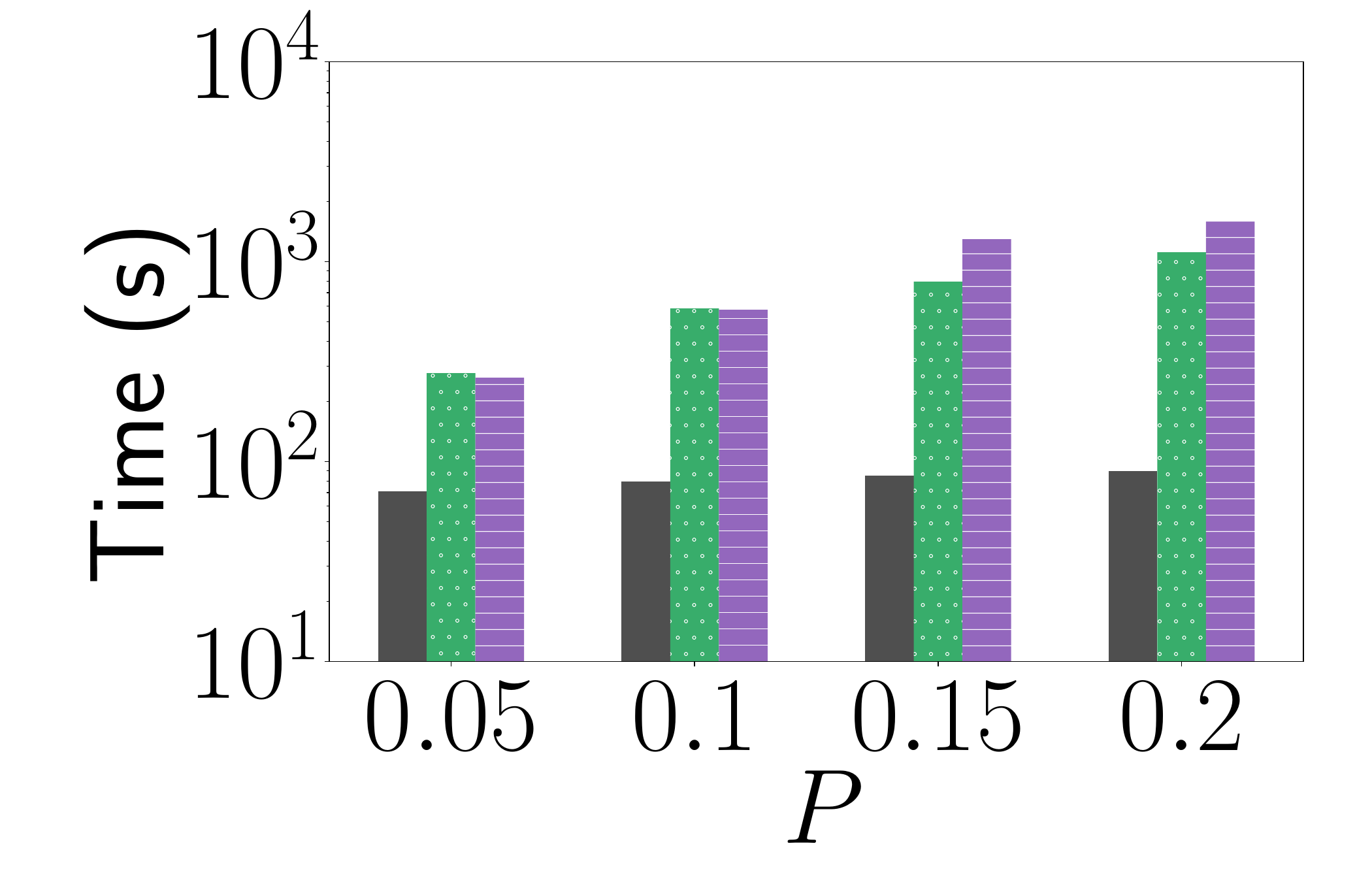}}
  \subfloat[Comm vs. $P$]{%
    \includegraphics[width=.5\linewidth]{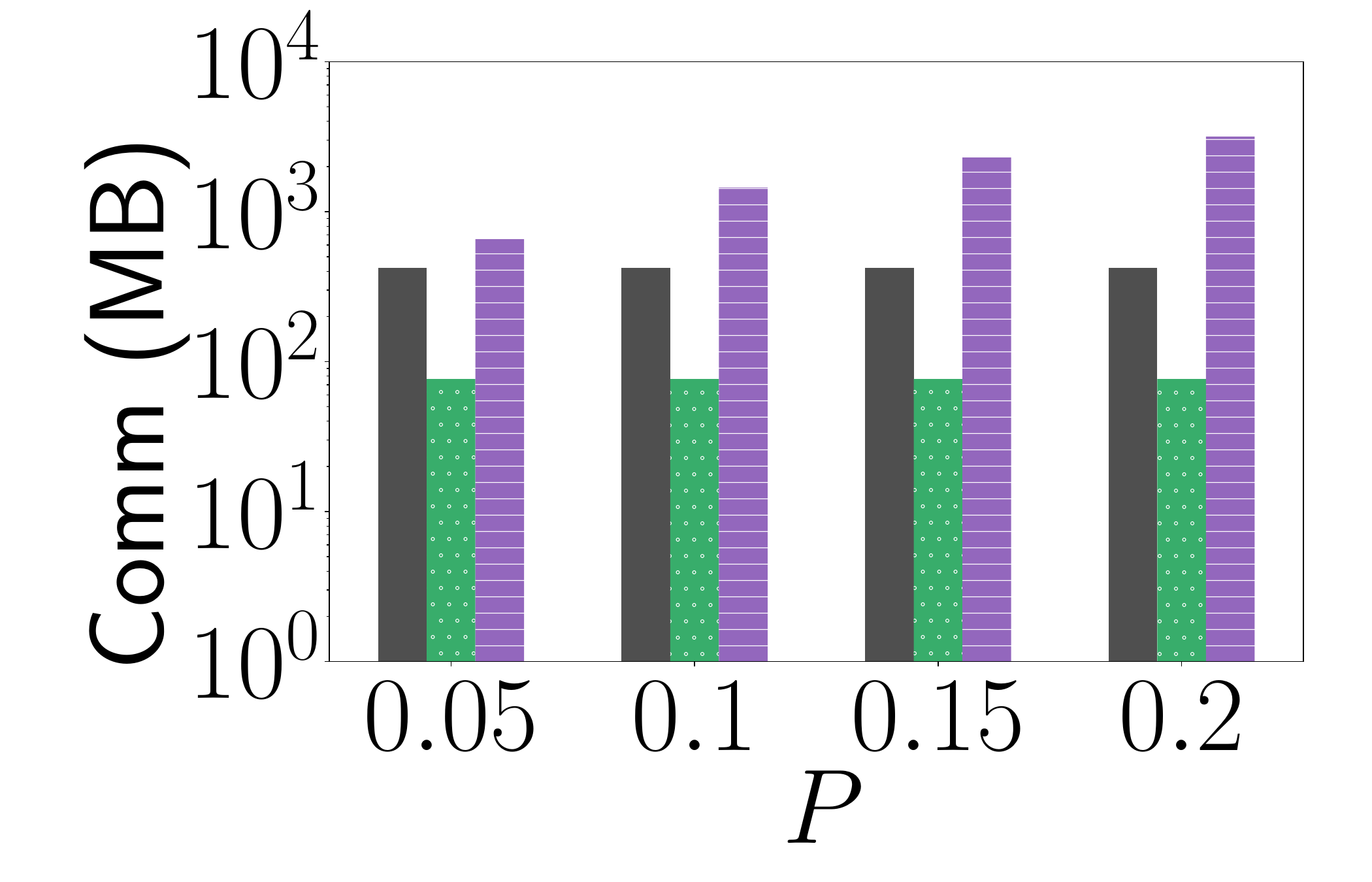}}
    \vspace{-0.15cm}
  \caption{Time and communication costs under various sampling proportions ($P$), using the BFV implementation in the Microsoft SEAL library.}
  \label{fig:seal_bfv_performance}
\end{figure}

\subsection{Parameter Analysis} \label{subsec:prange}
We utilize the Brightkite dataset to evaluate how the query range size and the Bloom filter's false positive rate ($f_p$) impact the performance of PPRC.

\noindent \textbf{\textit{Impact of the Query Range Size.}}
In PRP, larger query ranges require larger Bloom-filter encodings, which increases computation and communication overhead. We therefore evaluate how query range size affects overall system performance.
\cref{fig:result3}(a) reports the Time and communication costs as the query range side length expands from $1\,\mathrm{km}$ to $5\,\mathrm{km}$ (i.e., area from $1\,\text{km} \times 1\,\text{km}$ square to $5\,\text{km} \times 5\,\text{km}$).
PPRC demonstrates remarkable performance stability with respect to the query range size: the Time cost increases only slightly from $1.14$\,s to $1.50$\,s, even as the range expands by $25\times$.
This is because the computational complexity of PPRC is primarily dominated by operations dependent on the number of data records, regardless of the query range size.
Meanwhile, the communication cost grows from $46.1$\,MB to $142.3$\,MB. 
This is expected, since the encrypted Bloom filter size increases from 1.09\,MB to 5.47\,MB as the query range side length expands from 1\,km to 5\,km, resulting in higher communication costs.

\noindent \textbf{\textit{Impact of False Positive Rate ($f_p$).}}
\cref{fig:result3}(b) illustrates the trade-off between accuracy (measured by MAE) and communication cost when varying $f_p$ from $10^{-2}$ to $10^{-6}$.
A lower $f_p$ implies higher accuracy but requires larger Bloom filters, leading to higher communication cost.
We observe a sharp decline in MAE as $f_p$ decreases from $10^{-2}$ to $10^{-4}$, where the error drops drastically from $192.6$ to $28.7$.
However, further reducing $f_p$ beyond $10^{-4}$ yields diminishing returns. 
For instance, decreasing $f_p$ from $10^{-5}$ to $10^{-6}$ only reduces the MAE marginally (from $16.6$ to $15.3$), yet the communication cost continues to rise steadily.
Therefore, we select $f_p = 10^{-4}$ as the optimal default parameter, as it strikes the best balance between high accuracy and low communication cost.

%% file: related.tex
\section{Related Work}\label{sec:related}

Private geographic query (PGQ) schemes can be classified into two categories: outsourcing-data-based (\textbf{OD-PGQ}) and distributed-data-based (\textbf{DD-PGQ}).
In OD-PGQ, DHs encrypt and outsource their datasets to a third party, which processes user queries over the protected data.
In DD-PGQ, multiple DHs retain their datasets locally and jointly process user queries.

\textit{OD-PGQ.} Early works like CryptDB \cite{popa2011cryptdb} utilize Order-Preserving Encryption (OPE) to encrypt the plaintext, which is later shown to be vulnerable to inference attacks \cite{wang2013secure}.
Subsequent works improve index structures for secure range queries~\cite{wang2013secure, karras2016adaptive, li2015fast}.
With the development of location-based services, \cite{cui2019geo,wang2020search} design secure schemes to support geographic keyword queries. However, they are limited to the Boolean keyword test, i.e., determining whether a query keyword set is fully contained in a given keyword set. To address this, Song et al. \cite{song2021privacy} and Zhang et al. \cite{zhang2022efficient} design secure schemes supporting geographic keyword similarity queries based on Euclidean similarity and Jaccard similarity, respectively. 
TVA~\cite{faisal2023tva} supports deduplication to the encrypted database and thus supports more complicated queries. 
More recently, Zhang et al.~\cite{ZhangLZZGWSL24} improve the flexibility by supporting arbitrary query ranges.
Beyond encrypted databases, OD-PGQ systems also utilize Function Secret Sharing (FSS) \cite{dauterman2022waldo} and trusted execution environments like Intel SGX \cite{bittau2017prochlo,zhang2024fedknn}.
However, these systems typically rely on outsourced computation with non-colluding servers or trusted hardware.
In contrast, our setting assumes that each DH retains its data locally, which is common in cross-organizational applications where raw datasets cannot be outsourced due to privacy or regulatory constraints.

\begin{figure}[t]
  \centering
  \subfloat[Results for different range sizes]{%
    \includegraphics[width=.5\linewidth]{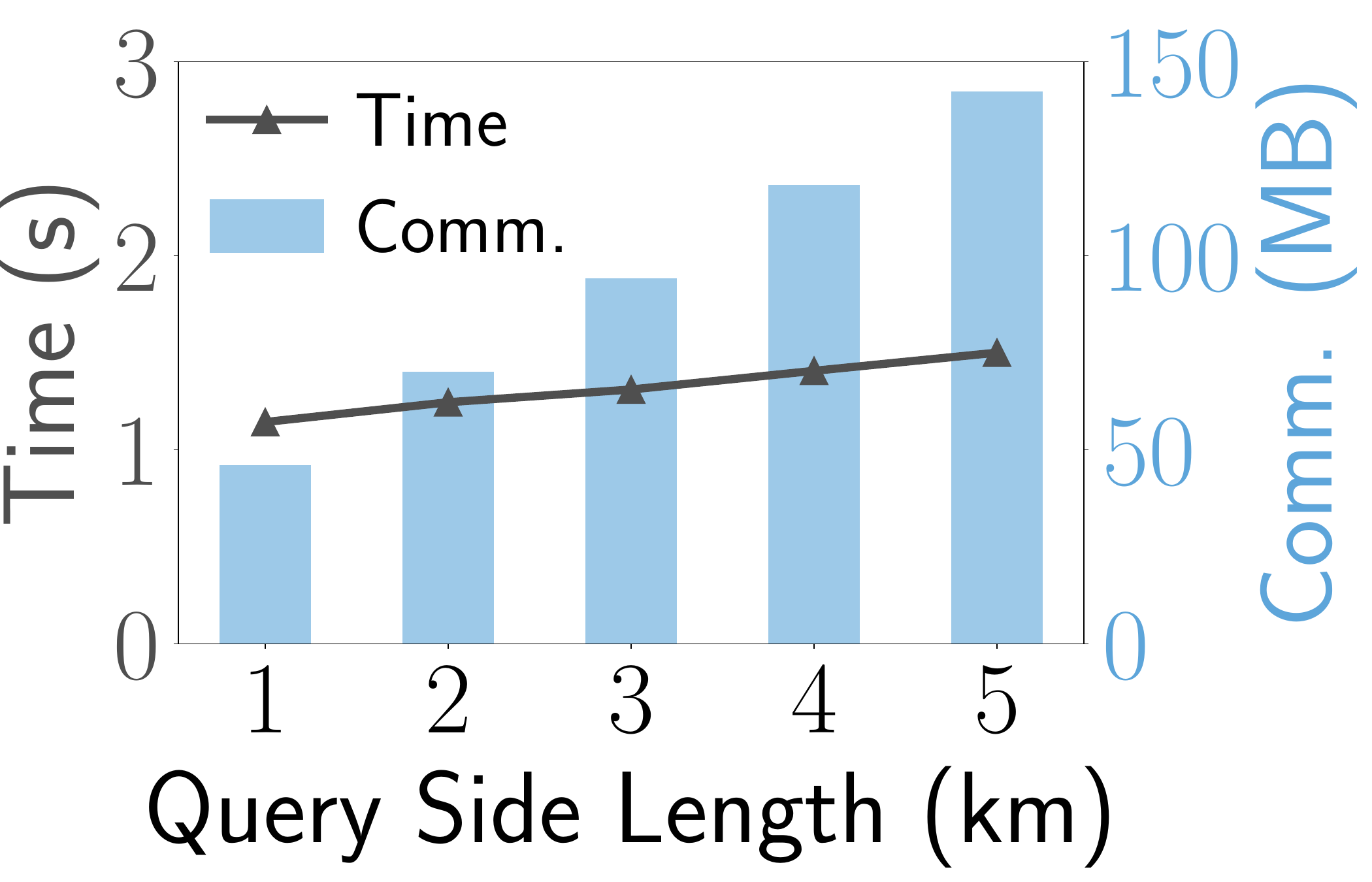}}
  \subfloat[Results for different $f_p$]{%
    \includegraphics[width=.5\linewidth]{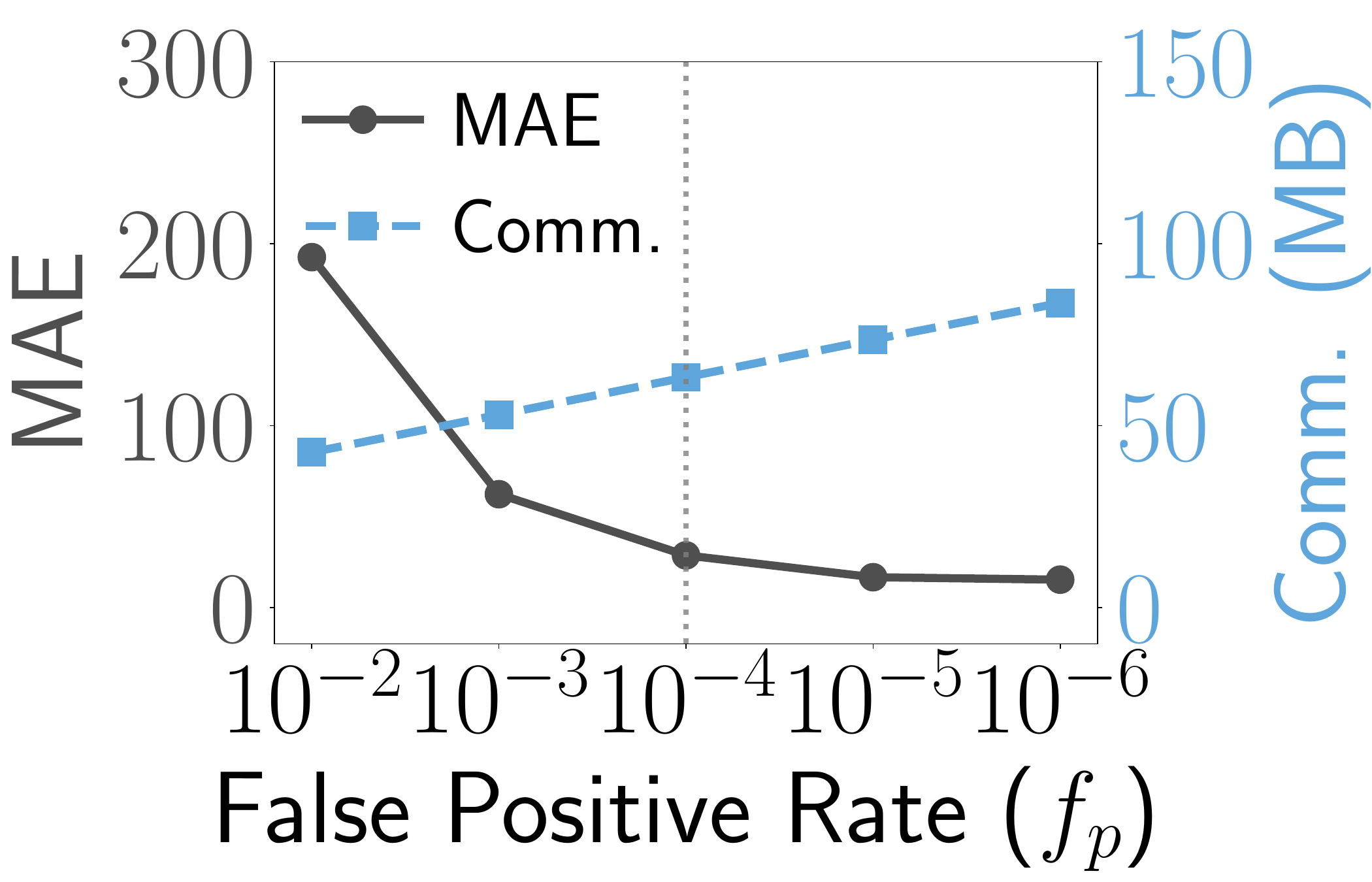}}
    \vspace{-0.25cm}
  \caption{The PPRC results for different query range sizes and Bloom filter's false positive rates $f_p$.}
  \label{fig:result3}
\end{figure}

\textit{DD-PGQ.} In practical scenarios, geographic queries often span private datasets held by multiple DHs who are unwilling to outsource data due to privacy concerns, rendering OD-PGQ approaches unsuitable.
To address this, DD-PGQ schemes have been proposed.
Shi et al. \cite{shi2021efficient} first investigate approximate range aggregation queries without privacy considerations, followed by Hu-fu \cite{tong2022hu}, the first DD-PGQ scheme based on cryptographic tools.
Zhang et al. \cite{zhang2023approximate} introduce efficient and secure algorithms for approximate k-nearest neighbor queries across distributed datasets.
RCC \cite{akhavan2023level} supports exact range queries while protecting the privacy of the query range. 
To alleviate this bottleneck, recent works \cite{li2023efficient, chen2025u} incorporate differential privacy (DP) \cite{dwork2006differential}. Their core idea is to release DP-perturbed values. 
Li et al. \cite{li2023efficient} introduce FedGroup, a private range counting scheme that combines cryptographic tools with DP to securely group DHs and adds DP noise at the group level.
Chen et al. \cite{chen2025u} propose U-DPAP, a DP-only range counting scheme. This scheme groups similar data and injects noise into the groups.
While offering improved efficiency, they are less accurate than cryptography-based solutions due to noise injection into the query result. Moreover, they provide weaker privacy guarantees, as the released perturbed data remains close to the original data.

Despite recent progress, existing DD-PGQ schemes either assume non-overlapping datasets or incur prohibitive overhead when protecting query privacy. 
Thus, achieving efficient, query-private, and deduplicated range counting over distributed geographic data remains an open challenge, which our work aims to address.

%% file: conclusions.tex
\section{Discussion}

\noindent \textbf{Functional extensions.}
Although PPRC is designed for rectangular range counting, it can be naturally extended to support arbitrary spatial regions (e.g., circular or polygonal areas), without modifying its core protocol. 
Specifically, Geohash-based encoding~\cite{guo2021luxgeo} and prefix compression~\cite{li2020adaptive} can represent target regions as compact prefix sets, which are directly compatible with the existing PRP and OLC design. 
Beyond the extension, another important future direction is to optimize PPRC for single-source range counting and single-point queries, which are naturally supported by our framework but are not the primary focus of this work.

\noindent \textbf{Supporting multiple QUs.}
Although PPRC is described under a single-QU setting, it can be extended to support multiple QUs because each query execution is independent. 
Concurrent requests can therefore be handled through the standard request-level parallelization at the DHs. 
To mitigate potential information leakage when the same QU submits multiple queries, PPRC can be complemented with standard orthogonal mechanisms such as query auditing, privacy-budget enforcement, rate limiting, or differential privacy.
Besides, batch query scheduling and preprocessing of query-independent data structures can further improve the overall system throughput without changing the core protocol design in high-throughput settings.

\noindent \textbf{Extension beyond geographic data.}
While motivated by geographic range counting, PPRC is not inherently restricted to geographic data. 
Its design only assumes that records can be represented as points in a multidimensional domain and that queries can be formulated as bounded multidimensional ranges.
Under this abstraction, PPRC can naturally support other range-counting scenarios, such as healthcare, sensor, and business analytics, without changing the core protocol.

\section{Conclusion} \label{sec:conclusions}
In this paper, we propose PPRC, the first protocol for Private Distributed Range Counting (PDRC) that simultaneously achieves accuracy across overlapping datasets, bilateral privacy, and practical efficiency.
PPRC adopts a secure sketching framework that integrates a secure range predicate (PRP) with an efficient aggregation scheme (OLC).
Specifically, PRP reformulates secure point-in-range evaluation as a secure membership test using encrypted Bloom filters, enabling efficient range evaluation without disclosing the query range. 
OLC securely aggregates overlapped partial results through lightweight cryptographic additions while ensuring no leakage beyond the final count.
We provide a formal analysis of the accuracy, efficiency, and security of PPRC.
Experiments on real-world and synthetic datasets demonstrate that PPRC achieves up to $55\times$ lower errors and $37\times$ faster performance compared to baseline protocols, establishing it as a practical solution to PDRC.